\renewcommand{\emph}[1]{{\textbf{#1}}}
\newcommand{\AND}{\textsc{And}}
\newcommand{\OR}{\textsc{Or}}
\newcommand{\AOP}{\textsc{And}-\textsc{Or} Path}
\newcommand{\aop}{\textsc{And}-\textsc{Or} path}
\newcommand{\blank}[1]{-}
\newcommand{\customalign}[2]{\mathmakebox[\widthof{#1}]{#2}}
\DeclareMathOperator{\N}{\mathbb N}
\DeclareMathOperator{\Ceta}{\zeta}
\theoremstyle{plain}
\newtheorem{theorem}{Theorem}
\newtheorem{lemma}[theorem]{Lemma}
\theoremstyle{definition}
\newtheorem{definition}[theorem]{Definition}
\newenvironment{claim}{\par\noindent\textit{Claim:}}{}
\newenvironment{proof_of_claim}{\par\noindent\textit{Proof of the claim:}}{}
\begin{document}

\newtheorem{rem}[theorem]{Remark}
\newtheorem{problem}[theorem]{Problem}

\title{Faster Carry Bit Computation for Adder Circuits with Prescribed Arrival Times}

\author{Ulrich Brenner, Anna Hermann}
\affil{Research Institute for Discrete Mathematics, University of Bonn}
\date{\vspace{-5ex}}

\maketitle

\begin{abstract}
We consider the fundamental problem of constructing fast circuits for the carry bit computation in binary addition.
Up to a small additive constant, the carry bit computation reduces to computing
an \textsc{And}-\textsc{Or} path, i.e., a formula of type
$t_0 \land (t_1 \lor (t_2 \land ( \dots t_{m-1}) \dots )$ or
$t_0 \lor (t_1 \land (t_2 \lor ( \dots t_{m-1}) \dots )$.
We present an algorithm that computes the fastest known Boolean circuit for an \textsc{And}-\textsc{Or} path
with given arrival times $a(t_0), \dotsc, a(t_{m-1})$ for the input signals.
Our objective function is delay, a natural generalization of depth with respect to arrival times.
The maximum delay of the circuit we compute is
$\log_2 W + \log_2 \log_2 m + \log_2 \log_2 \log_2 m + 4.3$,
where $W := \sum_{i = 0}^{m-1} 2^{a(t_i)}$.
Note that $\lceil \log_2 W \rceil$ is a lower bound on the delay of any circuit
depending on inputs $t_0, \dotsc, t_{m-1}$
with prescribed arrival times.
Our method yields the fastest circuits for \textsc{And}-\textsc{Or} paths, carry bit computation and adders in terms of delay known so far.
\end{abstract}

%
%



\section{Introduction}

An {\aop} is a Boolean formula of type
\[
   t_0 \land (t_1 \lor (t_2 \land ( \dots t_{m-1}) \dots )
   \quad
   \text{ or }
   \quad
   t_0 \lor (t_1 \land (t_2 \lor ( \dots t_{m-1}) \dots )\,.
\]
We assume that for each Boolean input variable $t_i$ ($i \in \{0,\dots,m-1\}$),
an arrival time $a(t_i) \in \N$ is given. Our goal is to find a Boolean circuit
using only \AND~and \OR~gates
that computes the Boolean function of a given \aop~and minimizes the maximum delay
of the inputs.
Here, the delay of an input $t_i$ in a Boolean circuit is its arrival time $a(t_i)$
plus the length of a maximum directed path in the circuit starting
at $t_i$. Thus, the concept of delay minimization generalizes the
concept of depth minimization.
When only depth is considered, one has to assume that all input signals are available
at the same time (uniform arrival times), which is not the case on real-world chips.
Hence, taking non-uniform arrival times into account leads to a much more realistic problem
formulation.

\subsection{Applications of \AOP~Optimization}

Realizing \aop s with fast circuits is important for VLSI design in different aspects.

Computing fast \aop s is, up to a small additive constant,
equivalent to constructing fast adder circuits.
To see this, suppose we want to compute the sum of two binary numbers
$x = (x_{r-1}, \dotsc, x_0)$ and $y = (y_{r-1}, \dotsc, y_0)$ with most significant bit $r-1$.
Starting with $c_1 = (x_0 \land y_0)$,
the carry bits can be computed recursively via
\begin{align*}
   c_{i+1} & = (x_i \land y_i) \lor \Big((x_i \lor y_i) \land c_i\Big) \\
                  & =  \underbrace{(x_i \land y_i)}_{t_0} \lor \bigg(\underbrace{(x_i \lor y_i)}_{t_1} \land
                        \Big(\underbrace{(x_{i-1} \land y_{i-1})}_{t_2} \lor \big(\underbrace{(x_{i-1} \lor y_{i-1})}_{t_3} \land \dotsc \big)\Big)\bigg)
\end{align*}
for $0 < i < r$.
The computation of the $i$th carry bit is hence essentially the evaluation of an {\aop } of length $2i$.
Once all carry bits are known, the sum can be computed with an additional delay of $2$.
Adders with non-uniform input arrival times occur, e.g.,\
as a part of multiplication units (see Zimmermann~\cite{ZimmermannDiss}).
Depth optimization of adders (and thus of {\aop}s for the carry bit computation)
is a classical and well-studied optimization problem,
see 
Sklansky~\cite{sklansky1960conditional}, 
Brent~\cite{brent1970addition}, 
Khrapchenko~\cite{krapchenko1970asymptotic}, 
Kogge and Stone~\cite{kogge1973parallel}, 
Ladner and Fischer~\cite{ladner1980parallel}, and 
Brent and Kung~\cite{brent1982regular}. 

Another application of \aop~optimization is the comparison of binary numbers
since lexicographic comparison can be expressed as an \aop~(see, e.g., Grinchuk~\cite{Grinchuk}).

More generally, \aop~optimization is used to speed up timing-critical paths on VLSI chips.
Note that the most critical path $P$ on a chip
can be decomposed into, e.g., \AND~gates and inverter gates.
Using De Morgan's laws, the inverters can be pushed to the inputs of $P$,
making $P$ a path consisting of \AND~and \OR~gates (not necessarily alternating).
In \cite{Werber:BonnLogic}, Werber et al.\
successfully applied their \aop~optimization algorithm presented in \cite{Werber}
for iteratively improving such paths in a late stage of physical design.
In this context,
it is essential that the objective function of the algorithm used is delay instead of depth
since typically, input signals of the most critical path
will not arrive simultaneously.

\subsection{Previous Work}

For \aop s with uniform arrival times (i.e., only depth is considered),
the currently fastest Boolean circuit has been proposed by
Grinchuk~\cite{Grinchuk} reaching a depth of
$\log_2 m + \log_2 \log_2 m + \mathcal O(1)$.
This is close to the best known lower bounds on depth:
Khrapchenko~\cite{Khrapchenko} showed that
any circuit for an \aop~has a depth of at least
$\log_2 m + 0.15 \log_2 \log_2 \log_2 m + \Theta(1)$. The result is based
on a lower bound of
$\Theta\left(\frac{m \log_2 m \log_2 \log_2 m }{\log_2 \log_2 \log_2 \log_2 m }\right)$
on the product of size and depth of a Boolean formula
for an {\aop } (see Commentz-Walter and Sattler~\cite{Commentz-Walter80}).
For monotone circuits, i.e., circuits without negations (and the circuit built in
\cite{Grinchuk} is a monotone circuit), this lower bound can be improved to
$\Theta(m \log_2^2 m)$ (see Commentz-Walter~\cite{Commentz-Walter79}).
This directly implies a
lower bound of $\log_2 m + \log_2 \log_2 m + \Theta(1)$ on the depth
of a monotone circuit for an \aop.


\begin{table}
 \centering
   \begin{tabular}{llll}
    Author & Upper bound on delay & Size & Maximum fanout\\
    \toprule
    \cite{Werber}; \cite{Spirkl} & $1.441 \log_2 W + 2.674$ & $\mathcal O (m)$ & 2\\
    \cite{RautenbachEtal03}; \cite{SpirklMaster} & $(1 + \varepsilon) \left\lceil \log_2 W \right\rceil + \frac{3}{\varepsilon} + 5$ & $\mathcal O(\frac{m}{\varepsilon})$ & $2$\\
    \cite{RautenbachEtal03}; \cite{SpirklMaster} & $(1 + \varepsilon) \left\lceil \log_2 W \right\rceil + \frac{3}{\varepsilon} + 5$ & $\mathcal O(m)$ & $2^\frac{1}{\varepsilon}$\\
    \cite{SpirklMaster} & $\left\lceil \log_2 W \right\rceil + 2 \sqrt{2 \log_2 m - 2} + 6$ &  $\mathcal O(m \sqrt{\log_2 m})$ & $2$\\
    \cite{SpirklMaster} & $\left\lceil \log_2 W \right\rceil + 2 \sqrt{2 \log_2 m - 2} + 6$ &  $\mathcal O(m)$ & $2^{\sqrt{2 \log_2 m - 2}} + 1$\\
    Here & $ \log_2 W + \log_2 \log_2 m $ & $\mathcal O(m \log_2m \log_2 \log_2 m )$ & $\mathcal O(\log_2 m)$\\
         & \phantom{A} $+ \log_2 \log_2 \log_2 m + 4.3$ & & \\
   \bottomrule
   \end{tabular}
   \caption{Known upper bounds on delay of \aop s with non-uniform arrival times.}
   \label{non-uniform bounds}
\end{table}

For \aop s with non-uniform arrival times $a(t_0), \dotsc, a(t_{m-1})$,
the value $\lceil \log_2 W \rceil$
is a lower bound on the achievable delay,
where $W:= \sum_{i=0}^{m-1} 2^{a(t_i)}$ (see, e.g., Rautenbach et al.\ ~\cite{Werber}).
No stronger lower bounds on delay are known.
Rautenbach et al.~\cite{Werber} presented an algorithm
computing a Boolean circuit for an {\aop } with delay at most
$1.441 \log_2 W + 3$.
This delay bound was improved to $1.441 \log_2 W + 2.674$ by Held and Spirkl~\cite{Spirkl}.
In both of these circuits, so-called 2-input prefix gates are used, and
it can be shown that any \aop~realization based on prefix gates has a
delay of at least $\log_\varphi\left( \sum_{i=0}^{m-1} \varphi^{a(t_i)} \right)$
where $\varphi = \frac{1+\sqrt{5}}{2} \approx 1.618$ is the golden ratio
(see~\cite{Spirkl}). In particular, this implies that any
prefix-based \aop~realization has a depth of
at least $1.44 \log_2 m -1$.
Without using prefix gates,
Rautenbach et al.\ \cite{RautenbachEtal03} presented a
circuit for an {\aop } with delay at most
$(1 + \varepsilon) \left\lceil \log_2 W \right\rceil + c_\varepsilon$
(for any $\varepsilon > 0$), where $c_\varepsilon$ is a number
depending on $\varepsilon$, only.
Spirkl~\cite{SpirklMaster} specified the delay bound to
$(1 + \varepsilon) \left\lceil \log_2 W \right\rceil + \frac{6}{\varepsilon} + 8 + 5\varepsilon$
and improved it to
$(1 + \varepsilon) \left\lceil \log_2 W \right\rceil + \frac{3}{\varepsilon} + 5$.
Moreover, Spirkl~\cite{SpirklMaster} described a circuit with a delay
of at most
$\left\lceil \log_2 W \right\rceil + 2 \sqrt{2 \log_2 m - 2} + 6$.
Note that for any $\varepsilon > 0$, this is actually a better delay bound than
$(1 + \varepsilon) \left\lceil \log_2 W \right\rceil + \frac{3}{\varepsilon} + 5$
(because
$\varepsilon \log_2 W +  \frac{3}{\varepsilon} + 5
\geq 2\sqrt{3 \log_2 W} + 5
\geq 2\sqrt{3 \log_2 m} + 5 \geq
2 \sqrt{2 \log_2 m - 2} + 6$).
Up to now, this is the fastest circuit for \aop s with non-uniform arrival times.
Table \ref{non-uniform bounds} summarizes these results in comparison with our delay bound.
We also state size (i.e., number of gates) and maximum fanout of the constructed circuits.
Note that some methods trade off size against fanout and provide two different circuits.

All the \aop{} circuits presented in the table can be used directly
to obtain adder circuits with the same delays by computing each carry bit separately.
However, the size of the arising circuits would be super-quadratic.
The construction of linear-size adder circuits with our delay guarantee is part of ongoing work.

\subsection{Our Contribution}

In this paper, we present an algorithm with running time $\mathcal O(m^2 \log_2 m)$
that computes a Boolean circuit for \aop s (with $m \geq 3$)
using only two-input \AND~and \OR~gates
with a delay of at most
\[
   \log_2 W + \log_2 \log_2 m + \log_2 \log_2 \log_2 m + 4.3\,,
\]
size $\mathcal O(m \log_2 m)$
and maximum fanout $\log_2 m + \log_2 \log_2 m + \log_2 \log_2 \log_2 m + 3.3$.
In terms of delay, this yields the currently best known circuits for \aop s and thus also adder circuits.
In particular, we improve the previously best known delay bound
of $\left\lceil \log_2 W \right\rceil + 2 \sqrt{2 \log_2 m - 2} + 6$ by Spirkl \cite{SpirklMaster}
for each $m \geq 3$ as well as asymptotically.
The construction of the circuit is based on a recursive approach similar to the algorithm
of Grinchuk~\cite{Grinchuk} for uniform arrival times.



The rest of the paper is organized as follows.
In Section~\ref{sec::preliminaries}, we introduce basic definitions and results.
We give a formal description of the problem (Subsection~\ref{subsec:problem}),
define splitting steps which allow us to partition an instance into
smaller sub-instances (Subsection~\ref{subsec::recursion_formulas_ats})
and introduce a measure for deciding which instances
admit an \aop~realization with a given delay (Subsection \ref{subsec::delay:weight}).
Section~\ref{sec::bound_v} classifies these instances, which is the major step of the paper.
In Section~\ref{sec::bound_delay},
we deduce how to construct circuits realizing \aop s with a delay of at most
$\log_2 W + \log_2 \log_2 m + \log_2 \log_2 \log_2 m + 4.3$
and analyze the size and fanout as well as the runtime needed to compute such circuits.
In Section~\ref{generalized aops}, we extend this result to paths that also
consist of \AND~and \OR~gates only, but not necessarily alternatingly.

\section{Preliminaries}\label{sec::preliminaries}

\subsection{Problem Formulation} \label{subsec:problem}

We denote the set of natural numbers including zero by $\N$.
Our notation regarding Boolean functions and circuits is based on Savage \cite{Savage:1997:MCE:522806}.
Given $r \in \N$ and a Boolean function $h \colon \{0, 1\}^r \to \{0, 1\}$
with Boolean input variables (shorter, inputs) $x_0, \dotsc, x_{r-1}$,
we write $x = (x_0, \dotsc, x_{r-1})$ as a shorthand for all inputs with fixed ordering.
If $r = 0$, we write $x = ()$.

\begin{definition}\label{aop}
   Let Boolean input variables $t = (t_0, \dotsc, t_{m-1})$ for some $m \in \N$ with $m > 0$ be given.
   We call each of the recursively defined functions
   \[g(t) = \begin{cases}
                    t_0 & m = 1 \\
                    t_0 \land g^*((t_1, \dotsc, t_{m-1})) & m > 1
                 \end{cases} \quad \text{ and } \quad
    g^*(t) = \begin{cases}
                    t_0 & m = 1 \\
                    t_0 \lor g((t_1, \dotsc, t_{m-1})) & m > 1
                 \end{cases} \]
   an \emph{\aop} on $m$ inputs.
\end{definition}

\begin{figure}
\centering
\begin{minipage}{0.42\textwidth}
\centering
\vspace{-.1cm}
\resizebox{0.897\textwidth}{!}{%
\begin{tikzpicture}

\node[outer sep=0pt, scale = 1.6] (x_2) at (2.5, 3.2){$t_0$};
\node[outer sep=0pt, scale = 1.6] (x_3) at (3.5, 3.2){$t_1$};
\node[outer sep=0pt, scale = 1.6] (x_4) at (4.5, 3.2){$t_2$};
\node[outer sep=0pt, scale = 1.6] (x_5) at (5.5, 3.2){$t_3$};
\node[outer sep=0pt, scale = 1.6] (x_6) at (6.5, 3.2){$t_4$};
\node[outer sep=0pt, scale = 1.6] (x_7) at (7.5, 3.2){$t_5$};

\node[fill=green, outer sep=0pt, or gate US, draw, logic gate inputs=nn, rotate=270, thick] at (7,2) (or1){};
\draw[thick] (x_7) -- (or1.input 1);
\draw[thick] (x_6) -- (or1.input 2);

\node[fill=red, outer sep=0pt, and gate US, draw, logic gate inputs=nn, rotate=270, thick] at (6,1) (and1){};
\draw[thick] (or1.output) -- (and1.input 1);
\draw[thick] (x_5) -- (and1.input 2);

\node[fill=green, outer sep=0pt, or gate US, draw, logic gate inputs=nn, rotate=270, thick] at (5,0) (or2){};
\draw[thick] (and1.output) -- (or2.input 1);
\draw[thick] (x_4) -- (or2.input 2);

\node[fill=red, outer sep=0pt, and gate US, draw, logic gate inputs=nn, rotate=270, thick] at (4,-1) (and3){};
\draw[thick] (or2.output) -- (and3.input 1);
\draw[thick] (x_3) -- (and3.input 2);

\node[fill=green, outer sep=0pt, or gate US, draw, logic gate inputs=nn, rotate=270, thick] at (3,-2) (or4){};
\draw[thick] (and3.output) -- (or4.input 1);
\draw[thick] (x_2) -- (or4.input 2);

\phantom{\node[outer sep=0pt, scale = 1.6] (x_0) at (3.5, 3.2){$s_0$};}
\phantom{\node[outer sep=0pt, scale = 1.6] (x_{10}) at (10.5, 3.2){$t_4$};}
\phantom{\node[fill=yellow, outer sep=0pt, and gate US, draw, logic gate inputs=nn, rotate=270, thick] at (4,-4) (and5){};}

\end{tikzpicture}}
\caption*{(a)\, \aop~$g^*((t_0, \dotsc, t_{5}))$.}
\end{minipage}
\hfill
\begin{minipage}{0.57\textwidth}
\centering
\resizebox{0.583\textwidth}{!}{%
\begin{tikzpicture}

\node[outer sep=0pt, scale = 1.6] (x_0) at (3.5, 3.2){$s_0$};
\node[outer sep=0pt, scale = 1.6] (x_2) at (4.5, 3.2){$s_1$};
\node[outer sep=0pt, scale = 1.6] (x_4) at (5.5, 3.2){$s_2$};
\node[outer sep=0pt, scale = 1.6] (x_6) at (6.5, 3.2){$t_0$};
\node[outer sep=0pt, scale = 1.6] (x_7) at (7.5, 3.2){$t_1$};
\node[outer sep=0pt, scale = 1.6] (x_8) at (8.5, 3.2){$t_2$};
\node[outer sep=0pt, scale = 1.6] (x_9) at (9.5, 3.2){$t_3$};
\node[outer sep=0pt, scale = 1.6] (x_{10}) at (10.5, 3.2){$t_4$};

\node[fill=green, outer sep=0pt, or gate US, draw, logic gate inputs=nn, rotate=270, thick] at (10,2) (or0){};
\draw[thick] (x_{10}) -- (or0.input 1);
\draw[thick] (x_9) -- (or0.input 2);

\node[fill=red, outer sep=0pt, and gate US, draw, logic gate inputs=nn, rotate=270, thick] at (9,1) (and1){};
\draw[thick] (or0.output) -- (and1.input 1);
\draw[thick] (x_8) -- (and1.input 2);

\node[fill=green, outer sep=0pt, or gate US, draw, logic gate inputs=nn, rotate=270, thick] at (8,0) (or1){};
\draw[thick] (and1.output) -- (or1.input 1);
\draw[thick] (x_7) -- (or1.input 2);

\node[fill=red, outer sep=0pt, and gate US, draw, logic gate inputs=nn, rotate=270, thick] at (7,-1) (and2){};
\draw[thick] (or1.output) -- (and2.input 1);
\draw[thick] (x_6) -- (and2.input 2);

\node[fill=yellow, outer sep=0pt, and gate US, draw, logic gate inputs=nn, rotate=270, thick] at (6,-2) (and3){};
\draw[thick] (and2.output) -- (and3.input 1);
\draw[thick] (x_4) -- (and3.input 2);

\node[fill=yellow, outer sep=0pt, and gate US, draw, logic gate inputs=nn, rotate=270, thick] at (5,-3) (and4){};
\draw[thick] (and3.output) -- (and4.input 1);
\draw[thick] (x_2) -- (and4.input 2);

\node[fill=yellow, outer sep=0pt, and gate US, draw, logic gate inputs=nn, rotate=270, thick] at (4,-4) (and5){};
\draw[thick] (and4.output) -- (and5.input 1);
\draw[thick] (x_0) -- (and5.input 2);

\end{tikzpicture}}
\caption*{(b)\, Extended~\aop~$f((s_0, s_1, s_2), (t_0, \dotsc, t_4))$.}
\end{minipage}
\caption{Examples for (extended) \aop s.}
\label{ext aop figure}
\end{figure}

We want to realize a given {\aop} as a Boolean circuit over the basis $\{\land, \lor\}$.
A Boolean circuit over the basis $\{\land, \lor\}$
is a directed acyclic graph such that
\begin{itemize}
 \item the nodes with indegree 0, called inputs, are labeled by Boolean variables,
 \item there is only one node with outdegree 0, called output, and it has indegree exactly 1,
 \item each of the remaining nodes has indegree exactly 2 and outdegree at least 1 and is labeled
       either as an \AND~gate or as an \OR~gate.
\end{itemize}
The logical function on the input variables computed by the output of the circuit can be obtained by
combining the logical functions represented by the gates recursively.
A circuit is a realization of the
{\aop} $g(t)$ if and only if the output signal equals $g(t)$ for all input values
(correspondingly for $g^*$).
Figure~\ref{ext aop figure}~(a) shows a Boolean circuit which is a
straightforward realization of the \aop~$g^*((t_0, \dotsc, t_5))$.
We omit drawing the output in such pictures
-- the unique predecessor of the output will always be the bottommost gate.

We assume that each input variable is associated with a prescribed arrival time.

\begin{definition}
 Let $r \in \N$ and
 a Boolean function $h \colon \{0, 1\}^r \to \{0, 1\}$
 on Boolean input variables $x = (x_0, \dotsc, x_{r-1})$
 with \emph{arrival times} $a \colon \{x_0, \dotsc, x_{r-1}\} \to \N$ be given.
 Consider a circuit $C$ computing $h$.
 For $i = 0, \dotsc, r-1$, the \emph{delay of input} $x_i$ is defined as
 $a(x_i) + l(x_i)$, where $l(x_i)$ denotes the maximum number of gates
 of any directed path in $C$ starting at input $x_i$.
 The \emph{delay of the circuit} $C$ is the maximum delay of any input.
\end{definition}

Note that for uniform arrival times, e.g., $a \equiv 0$, the delay of a circuit
is simply the depth of the circuit.

Our goal is computing fast circuits for \aop s, i.e., solving the following problem:

\begin{problem}[\AOP~Optimization Problem]
For $m \in \N$ with $m > 0$,
consider Boolean input variables $t = (t_0, \dotsc, t_{m-1})$
with arrival times $a \colon \{t_0, \dotsc, t_{m-1}\} \to \N$.
Find circuits over the basis $\{\land, \lor\}$
that compute the \aop s $g(t)$ and $g^*(t)$ with minimum possible delay.
\end{problem}

Note that adding the same constant to all arrival times does not change
the problem. This is why we only allow non-negative arrival times in this formulation.
Moreover, forbidding non-integral arrival times is not a significant restriction
because rounding
up all arrival times will change the delay of any circuit by less than $1$.
Therefore, we only consider natural numbers as arrival times.

\subsection{Recursive Circuit Construction}\label{subsec::recursion_formulas_ats}

We construct fast circuits for {\aop }s in a recursive way.
Before describing all details
of the approach, we explain the idea of the induction step.

Suppose we want to realize the {\aop }
$g^*(t) = t_{0} \lor (t_{1} \land t_{2} \lor (t_{3} \land ( \dotsc t_{m-1} \dotsc)))$.
We subdivide the inputs $t_0, \dotsc, t_{m-1}$ into
two groups  $t_0, \dotsc, t_{2k}$ and  $t_{2k+1}, \dotsc, t_{m-1}$.
Recursively, we compute fast circuits for the {\aop s} on each of these input sets.
These two circuits can be combined to
a circuit for the whole {\aop} as illustrated in
an example with $m = 12$ and $k = 3$ in Figure~\ref{alternating split figure};
the general construction is described in Lemma~\ref{alternating split proof}.
The output of the circuit for the subinstance $t_{2k+1}, \dotsc, t_{m-1}$
is combined with every second input of $t_0, \dotsc, t_{2k}$ by using
only {\AND } gates.
Just one additional \OR~gate (labeled ``$G$'' in the picture)
is needed to compute a circuit for the whole {\aop}. It is not too
difficult to check that the circuits in Figure~\ref{alternating split figure}~(a)
and (b) are logically equivalent.
To see this, note that the output of the circuit in (a) is ``true'' if and only if
there is an $i \in \{0,2,4,6,8,10,11\}$ such that the input signals
at $t_i$ and at every $t_j$ with $j$ odd and
$j \in \{1,\dots, i - 1\}$ are ``true''. This is also a sufficient
and necessary condition for a ``true'' as an output of the circuit in (b).

Note that while the left input of gate $G$ in the example is the output
of an {\aop}, the right input of $G$ is the output of a
function combining the {\aop } for $t_{2k+1}, \dotsc, t_{m-1}$
with a multiple-input {\sc AND}.
The occurrence of such functions in our recursion requires generalizing the concept of
\aop s.

\begin{definition} \label{ext aop}
   Given $n, m \in \N$, $m > 0$, and inputs $s_0, \dotsc, s_{n-1}, t_0, \dotsc, t_{m-1}$
   subdivided into $s = (s_0, \dotsc, s_{n-1})$ and $t = (t_0, \dotsc, t_{m-1})$,
   we define the \emph{extended \aop s}
   \[ f(s, t) = s_0 \land f((s_1, \dotsc, s_{n-1}), t)
   \quad \text{ and } \quad
   f^*(s, t) = s_0 \lor f^*((s_1, \dotsc, s_{n-1}), t) \,, \]
   where $f(s, t) = g(t)$, $f^*(s, t) = g^*(t)$ in the case that $s = ()$.
   We call the input variables $s$ \emph{symmetric inputs}
   and the input variables $t$ \emph{alternating inputs}, respectively.
\end{definition}

Figure~\ref{ext aop figure}~(b) shows the extended \aop~$f(s, t)$
for $s = (s_0, s_1, s_2)$ and $t = (t_0, \dotsc, t_4)$.
We shall always assume that the set of input variables contained in $s$ and $t$ are disjoint sets
indexed by $s_0, \dotsc, s_{n-1}$ and $t_0, \dotsc, t_{m-1}$.
Note that expanding the definitions of $f(s, t)$ and $f^*(s, t)$ given in Definition~\ref{ext aop} yields
\begin{align*}
f(s, t) &= s_0 \land \dotsc \land s_{n - 1} \land g(t) \hspace{-.65cm} &= s_0 \land \dotsc \land s_{n - 1} \land
                                      t_{0} \land (t_{1} \lor t_{2} \land (t_{3} \lor ( \dotsc t_{m-1} \dotsc)))\,,\\
f^*(s, t) &= s_0 \lor \dotsc \lor s_{n - 1} \lor g^*(t) \hspace{-.65cm} &= s_0 \lor \dotsc \lor s_{n - 1} \lor
                                        t_{0} \lor (t_{1} \land t_{2} \lor(t_{3} \land ( \dotsc t_{m-1} \dotsc)))\,,
\end{align*}
where, for $m$ odd, the innermost operation of $f(s, t)$ and $f^*(s, t)$ is $\lor$
and $\land$, respectively, and vice versa for $m$ even.

Due to the duality principle of Boolean algebra,
over the basis $\{\land, \lor\}$,
any realization for $f(s, t)$
yields a realization of $f^*(s, t)$ and vice
versa by switching all \AND~and \OR~gates.
In order to compute fast realizations for $f$ and $f^*$,
we will apply two methods that allow realizing $f$ and $f^*$ recursively.
Each of these methods reduces the problem of realizing $f(s, t)$ to the problem
of realizing extended \aop s with strictly fewer symmetric or alternating inputs.

First, we formally describe the well-known method which is depicted in Figure~\ref{alternating split figure}
for the special case that $s = ()$, $m = 12$ and $k = 3$.

\begin{figure}
\centering
\providecommand{\subfigw}{0.44\textwidth}
\begin{minipage}{\subfigw}
\resizebox{\textwidth}{!}{%
\begin{tikzpicture}
\node[outer sep=0pt, scale = 1.6] (x0)  at (-1.5, 6.2){$t_0$};
\node[outer sep=0pt, scale = 1.6] (x1)  at (-0.5, 6.2){$t_1$};
\node[outer sep=0pt, scale = 1.6] (x2)  at (0.5,  6.2){$t_2$};
\node[outer sep=0pt, scale = 1.6] (x3)  at (1.5,  6.2){$t_3$};
\node[outer sep=0pt, scale = 1.6] (x4)  at (2.5,  6.2){$t_4$};
\node[outer sep=0pt, scale = 1.6] (x5)  at (3.5,  6.2){$t_5$};
\node[outer sep=0pt, scale = 1.6] (x6)  at (4.5,  6.2){$t_6$};
\node[outer sep=0pt, scale = 1.6] (x7)  at (5.5,  6.2){$t_7$};
\node[outer sep=0pt, scale = 1.6] (x8)  at (6.5,  6.2){$t_8$};
\node[outer sep=0pt, scale = 1.6] (x9)  at (7.5,  6.2){$t_9$};
\node[outer sep=0pt, scale = 1.6] (x10) at (8.5,  6.2){$t_{10}$};
\node[outer sep=0pt, scale = 1.6] (x11) at (9.5,  6.2){$t_{11}$};

\draw[thick, decorate,decoration={brace, amplitude=5pt, raise=10pt}] (x0.west) -- (x6.east) node [midway, above, sloped, yshift=15pt, scale = 1.6] {$t'$};
\draw[thick, decorate,decoration={brace, amplitude=5pt, raise=10pt}] (x7.west) -- (x11.east) node [midway, above, sloped, yshift=15pt, scale = 1.6] {$t''$};

\node[fill=green, outer sep=0pt, or gate US, draw, logic gate inputs=nn, rotate=270, thick] at (7,3) (or1){};
\node[fill=red, outer sep=0pt, and gate US, draw, logic gate inputs=nn, rotate=270, thick] at (6,2) (and1){};

\node[fill=green, outer sep=0pt, or gate US, draw, logic gate inputs=nn, rotate=270, thick] at (5,1) (or4){};
\node[fill=red, outer sep=0pt, and gate US, draw, logic gate inputs=nn, rotate=270, thick] at (4,0) (and2){};

\node[fill=green, outer sep=0pt, or gate US, draw, logic gate inputs=nn, rotate=270, thick] at (3,-1) (or5){};

\node[fill=red, outer sep=0pt, and gate US, draw, logic gate inputs=nn, rotate=270, thick] at (8,4) (and7){};
\node[fill=green, outer sep=0pt, or gate US, draw, logic gate inputs=nn, rotate=270, thick] at (9,5) (or8){};

\node[fill=red, outer sep=0pt, and gate US, draw, logic gate inputs=nn, rotate=270, thick] at (2,-2) (and6){};
\node[fill=green, outer sep=0pt, or gate US, draw, logic gate inputs=nn, rotate=270, thick] at (1,-3) (or9){};

\node[fill=red, outer sep=0pt, and gate US, draw, logic gate inputs=nn, rotate=270, thick] at (0,-4) (and10){};
\node[fill=green, outer sep=0pt, or gate US, draw, logic gate inputs=nn, rotate=270, thick] at (-1,-5) (or11){};

\draw[thick, cyan] ($(and1.output) + (-0.8, 0.3)$) -- ($(and1.output) - (0, 0.5)$);

\draw[thick] (or1.output) -- (and1.input 1);
\draw[thick] (x8) -- (or1.input 2);
\draw[thick] (x7) -- (and1.input 2);
\draw[thick] (x9) -- (and7.input 2);

\draw[thick] (or4.output) -- (and2.input 1);
\draw[thick] (or8.output) -- (and7.input 1);

\draw[thick] (x5) -- (and2.input 2);
\draw[thick] (x10) -- (or8.input 2);
\draw[thick] (x11) -- (or8.input 1);

\draw[thick] (x6) -- (or4.input 2);
\draw[thick] (and1.output) -- (or4.input 1);

\draw[thick] (and2.output) -- (or5.input 1);
\draw[thick] (or5.output) -- (and6.input 1);

\draw[thick] (x4) -- (or5.input 2);
\draw[thick] (x3) -- (and6.input 2);

\draw[thick] (x2) -- (or9.input 2);
\draw[thick] (and6.output) -- (or9.input 1);
\draw[thick] (and7.output) -- (or1.input 1);

\draw[thick] (or9.output) -- (and10.input 1);
\draw[thick] (x1) -- (and10.input 2);
\draw[thick] (and10.output) -- (or11.input 1);
\draw[thick] (x0) -- (or11.input 2);

\end{tikzpicture}}
\caption*{(a) \, \aop~on $t = (t_0, \dotsc, t_{11})$.}
\end{minipage}
\hfill
\begin{minipage}{\subfigw}
\resizebox{\textwidth}{!}{%
\begin{tikzpicture}

\node[outer sep=0pt, scale = 1.6] (x0) at (-3.5, 6.2){$t_0$};
\node[outer sep=0pt, scale = 1.6] (x1) at (-2.5, 6.2){$t_1$};
\node[outer sep=0pt, scale = 1.6] (x2) at (-1.5, 6.2){$t_2$};
\node[outer sep=0pt, scale = 1.6] (x3) at (-0.5, 6.2){$t_3$};
\node[outer sep=0pt, scale = 1.6] (x4) at (0.5, 6.2){$t_4$};
\node[outer sep=0pt, scale = 1.6] (x5) at (1.5, 6.2){$t_5$};
\node[outer sep=0pt, scale = 1.6] (x6) at (2.5, 6.2){$t_6$};
\node[outer sep=0pt, scale = 1.6] (x7) at (3.5, 6.2){$t_7$};
\node[outer sep=0pt, scale = 1.6] (x8) at (4.5, 6.2){$t_8$};
\node[outer sep=0pt, scale = 1.6] (x9) at (5.5, 6.2){$t_9$};
\node[outer sep=0pt, scale = 1.6] (x10) at (6.5, 6.2){$t_{10}$};
\node[outer sep=0pt, scale = 1.6] (x11) at (7.5, 6.2){$t_{11}$};
\node[fill=green, outer sep=0pt, or gate US, draw, logic gate inputs=nn, rotate=270, thick, opacity=0] at (-1,-5) (pseudo){};

\draw[thick, decorate,decoration={brace, amplitude=5pt, raise=10pt}] (x0.west) -- (x6.east) node [midway, above, sloped, yshift=15pt, scale = 1.6] {$t'$};
\draw[thick, decorate,decoration={brace, amplitude=5pt, raise=10pt}] (x7.west) -- (x11.east) node [midway, above, sloped, yshift=15pt, scale = 1.6] {$t''$};

\node[fill=red, outer sep=0pt, and gate US, draw, logic gate inputs=nn, rotate=270, thick] at (2,5) (and1){};
\draw[thick] (x6) -- (and1.input 1);
\draw[thick] (x5) -- (and1.input 2);

\node[fill=green, outer sep=0pt, or gate US, draw, logic gate inputs=nn, rotate=270, thick] at (1.5,4) (or1){};
\draw[thick] (and1.output) -- (or1.input 1);
\draw[thick] (x4) -- (or1.input 2);

\node[fill=red, outer sep=0pt, and gate US, draw, logic gate inputs=nn, rotate=270, thick] at (1,3) (and2){};
\draw[thick] (or1.output) -- (and2.input 1);
\draw[thick] (x3) -- (and2.input 2);

\node[fill=green, outer sep=0pt, or gate US, draw, logic gate inputs=nn, rotate=270, thick] at (.5,2) (or9){};
\draw[thick] (and2.output) -- (or9.input 1);
\draw[thick] (x2) -- (or9.input 2);

\node[fill=red, outer sep=0pt, and gate US, draw, logic gate inputs=nn, rotate=270, thick] at (0,1) (and10){};
\draw[thick] (or9.output) -- (and10.input 1);
\draw[thick] (x1) -- (and10.input 2);

\node[fill=green, outer sep=0pt, or gate US, draw, logic gate inputs=nn, rotate=270, thick] at (-.5,0) (or11){};
\draw[thick] (and10.output) -- (or11.input 1);
\draw[thick] (x0) -- (or11.input 2);

\node[fill=green, outer sep=0pt, or gate US, draw, logic gate inputs=nn, rotate=270, thick] at (7,5) (or6){};
\draw[thick] (x11) -- (or6.input 1);
\draw[thick] (x10) -- (or6.input 2);

\node[fill=red, outer sep=0pt, and gate US, draw, logic gate inputs=nn, rotate=270, thick] at (6.5,4) (and2){};
\draw[thick] (or6.output) -- (and2.input 1);
\draw[thick] (x9) -- (and2.input 2);

\node[fill=green, outer sep=0pt, or gate US, draw, logic gate inputs=nn, rotate=270, thick] at (6,3) (or3){};
\draw[thick] (and2.output) -- (or3.input 1);
\draw[thick] (x8) -- (or3.input 2);

\node[fill=red, outer sep=0pt, and gate US, draw, logic gate inputs=nn, rotate=270, thick] at (5.5,2) (and9){};
\draw[thick] (or3.output) -- (and9.input 1);
\draw[thick] (x7) -- (and9.input 2);

\node[fill=yellow, outer sep=0pt, and gate US, draw, logic gate inputs=nn, rotate=270, thick] at (4.9,1) (and8){};
\draw[thick]       (and9.output) -- (and8.input 1);
\draw[thick] (x5) -- (and8.input 2);

\node[fill=yellow, outer sep=0pt, and gate US, draw, logic gate inputs=nn, rotate=270, thick] at (4.3,0) (and5){};
\draw[thick] (and8.output) -- (and5.input 1);
\draw[thick] (x3) -- (and5.input 2);

\node[fill=yellow, outer sep=0pt, and gate US, draw, logic gate inputs=nn, rotate=270, thick] at (3.7,-1) (and12){};
\draw[thick] (and5.output) -- (and12.input 1);
\draw[thick] (x1) -- (and12.input 2);

\node[fill=cyan, outer sep=0pt, or gate US, draw, logic gate inputs=nn, rotate=270, thick] at (1.5,-2) (or4){};
\draw[thick] (and12.output) -- (or4.input 1);
\draw[thick] (or11.output) -- (or4.input 2);
\node[outer sep=0pt] (x10) at (2.1, -2.1){$G$};

\draw[thick, cyan] ($(and9.output) + (-0.8, 0.3)$) -- ($(and9.output) - (-0.2, 0.5)$);

\end{tikzpicture}}
\caption*{(b) \, Alternating split with $t' = (t_0, \dotsc, t_6)$.}
\end{minipage}
\caption{Performing the alternating split on $g^*((t_0, \dotsc, t_{11})) = t_0 \lor (t_1 \land ( \dotsc (t_{10} \lor t_{11}) \dotsc ))$.}
\label{alternating split figure}
\end{figure}

\begin{lemma} \label{alternating split proof}
Let input variables $s$ and $t$ and an integer $k$ with $0 \leq k < \frac{m-1}{2}$ be given.
Denote by $t'$ the (odd-length) prefix $t' = (t_0, t_1, \dotsc, t_{2k})$ of $t$,
and by $t''$ the remaining inputs of $t$, i.e., $t'' = (t_{2k+1}, \dotsc, t_{m-1})$.
Then, we have
 \begin{equation}
f^*(s, t) = f^*(s, t') \lor f(\widehat{t'}, t'')\,, \label{alternating split}
\end{equation}
where $\widehat{t'} := (t'_1, t'_3, t'_5, \dotsc, t'_{2k-1})$ contains every second entry of $t'$.
\proof
 At first, we prove the following claim:

 \begin{claim}
 For any Boolean variable $x$, we have
 \begin{equation} \label{split proof}
   g^*((t_0, \dotsc, t_{2k}, x)) = g^*((t_0, \dotsc, t_{2k})) \lor (t_1 \land t_3 \land \dotsc \land t_{2k-1} \land x)\,.
 \end{equation}
 \begin{proof_of_claim}
  We prove the claim by induction on $k$.
  For $k = 0$, we have
  \[g^*((t_0, \dotsc, t_{2k}, x)) = g^*((t_0, x)) = t_0 \lor x = g^*((t_0, \dotsc, t_{2k})) \lor (t_1 \land t_3 \land \dotsc \land t_{2k-1} \land x)\,.\]
  Assuming statement (\ref{split proof}) holds for some $k \geq 0$ and setting $x_1 := t_{2k+1} \land (t_{2k+2} \lor x)$
  and $x_2 := t_{2k+1} \land t_{2k+2}$, we compute
  { 
  \newcommand{\ca}[1]{&\customalign{======}{#1}}
  \begin{align*}
   \ca{} g^*((t_0, \dotsc, t_{2(k+1)}, x)) \\
   \ca{\stackrel{\text{Def. }\ref{aop}}{=}} g^*((t_0, \dotsc, t_{2k}, x_1)) \\
   \ca{\stackrel{(\ref{split proof})}{=}} g^*((t_0, \dotsc, t_{2k})) \lor (t_1 \land t_3 \land\dotsc \land t_{2k-1} \land x_1) \\
   \ca{=} g^*((t_0, \dotsc, t_{2k})) \lor (t_1 \land t_3 \land \dotsc \land t_{2k-1} \land t_{2k+1} \land (t_{2k+2} \lor x)) \\
   \ca{=} g^*((t_0, \dotsc, t_{2k})) \lor (t_1 \land  t_3 \land\dotsc \land t_{2k-1} \land x_2) \lor (t_1 \land t_3 \land\dotsc \land t_{2k-1} \land t_{2k+1} \land x) \\
   \ca{\stackrel{(\ref{split proof})}{=}} g^*((t_0, \dotsc, t_{2k}, x_2)) \lor (t_1 \land t_3 \land \dotsc \land t_{2k-1} \land t_{2k+1} \land x) \\
   \ca{\stackrel{\text{Def. }\ref{aop}}{=}}  g^*((t_0, \dotsc, t_{2k+2})) \lor (t_1 \land t_3 \land \dotsc \land t_{2k-1} \land t_{2k+1} \land x)\,.
  \end{align*}
  }
  This proves the inductive step and thus the claim.
  \end{proof_of_claim}
  \end{claim}

 From this claim and Definition \ref{aop}, we conclude
 { 
  \newcommand{\ca}[1]{&\customalign{======}{#1}}
 \begin{align*}
  f^*(s, t) \ca{\stackrel{\text{Def. }\ref{ext aop}}{=}} s_0 \lor \dotsc \lor s_{n-1} \lor g^*(t) \\
            \ca{\stackrel{\text{Def. }\ref{aop}}{=}} s_0 \lor \dotsc \lor s_{n-1} \lor g^*((t_0, \dotsc, t_{2k}, g(t''))) \\
            \ca{\stackrel{(\ref{split proof})}{=}} s_0 \lor \dotsc \lor s_{n-1} \lor g^*((t_0, \dotsc, t_{2k})) \lor (t_1 \land t_3 \land \dotsc \land t_{2k-1} \land g(t'')) \\
            \ca{\stackrel{\text{Def. }\ref{ext aop}}{=}} f^*(s, t') \lor f(\widehat{t'}, t'')\,. \qedhere
  \end{align*}
  }
\end{lemma}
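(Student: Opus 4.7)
The plan is to peel off the symmetric inputs first. By Definition~\ref{ext aop}, both $f^*(s,t)$ and $f^*(s,t')$ expand to $s_0 \lor \cdots \lor s_{n-1}$ disjoined with $g^*(t)$ respectively $g^*(t')$, and $f(\widehat{t'}, t'')$ expands to $t_1 \land t_3 \land \cdots \land t_{2k-1} \land g(t'')$ by iterating the $\land$-side of the extended \aop{} recursion until no symmetric inputs remain. Thus it suffices to prove the \aop-level identity
\[
   g^*(t) \;=\; g^*(t') \;\lor\; \bigl(t_1 \land t_3 \land \cdots \land t_{2k-1} \land g(t'')\bigr).
\]

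To prove this, I would unfold Definition~\ref{aop} on $g^*(t)$ to peel off $t_0, t_1, \ldots, t_{2k}$ one at a time. Because $2k+1$ is odd, the ``next function to apply'' at that point is $g$, not $g^*$, so the tail $(t_{2k+1}, \ldots, t_{m-1}) = t''$ enters as $g(t'')$. Setting $y := g(t'')$, this gives
\[
   g^*(t) \;=\; g^*\bigl((t_0, \ldots, t_{2k}, y)\bigr),
\]
and the task reduces to the following auxiliary identity, valid for any Boolean $x$:
\[
   g^*\bigl((t_0, \ldots, t_{2k}, x)\bigr) \;=\; g^*\bigl((t_0, \ldots, t_{2k})\bigr) \;\lor\; (t_1 \land t_3 \land \cdots \land t_{2k-1} \land x).
\]

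I would prove this auxiliary identity by induction on $k$. The base case $k=0$ is just $t_0 \lor x = t_0 \lor x$. For the induction step from $k$ to $k+1$, observe that the two innermost new entries $t_{2k+2}, x$ appear only inside $t_{2k+1} \land (t_{2k+2} \lor x)$; absorbing this as a single virtual input $x_1$ in slot $2k+1$ brings the expression back into the $k$-case, to which the induction hypothesis can be applied. The resulting disjunct $t_1 \land \cdots \land t_{2k-1} \land x_1$ is then expanded using distributivity of $\land$ over $\lor$ into a ``$t_{2k+2}$-part'' and an ``$x$-part.'' The $t_{2k+2}$-part is re-packaged by invoking the induction hypothesis in the opposite direction with virtual input $x_2 := t_{2k+1} \land t_{2k+2}$, yielding $g^*((t_0, \ldots, t_{2k+2}))$; the $x$-part is exactly the missing term $t_1 \land t_3 \land \cdots \land t_{2k+1} \land x$.

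The main obstacle is the bookkeeping in this double use of the induction hypothesis, once to ``unfold'' the innermost subexpression and once to ``re-fold'' it; everything else is pure definition-chasing together with distributivity of $\land$ over $\lor$. Once the auxiliary identity is established, substituting $x = g(t'')$ and unwinding $f^*$ and $f$ through Definition~\ref{ext aop} immediately gives (\ref{alternating split}).
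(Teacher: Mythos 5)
Your proposal matches the paper's proof essentially step for step: it reduces the statement to the same auxiliary identity $g^*((t_0,\dots,t_{2k},x)) = g^*((t_0,\dots,t_{2k})) \lor (t_1\land t_3\land\dots\land t_{2k-1}\land x)$, proves it by induction on $k$ using the same two virtual inputs $x_1 = t_{2k+1}\land(t_{2k+2}\lor x)$ and $x_2 = t_{2k+1}\land t_{2k+2}$ with the same double (unfold/refold) use of the induction hypothesis plus distributivity, and then substitutes $x = g(t'')$ and unwinds Definition~\ref{ext aop}. The argument is correct and there is no meaningful divergence from the paper's approach.
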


The previous lemma allows us to recursively construct a circuit for $f^*(s, t)$:
Inductively, we can assume that
we can compute a fast circuit for $f^*(s, t')$ and $f(\widehat{t'}, t'')$, respectively.
Combining these circuits as in equation~(\ref{alternating split}) yields a good circuit for $f^*(s, t)$.
We call this way to construct a circuit for $f^*(s, t)$ an \emph{alternating split}.

Similarly, if $n > 0$, the definition of $f(s, t)$ (see Definition \ref{ext aop})
yields two \emph{symmetric splits} indicated by the equations
\begin{align}
f(s, t) &= (s_0 \land \dotsc \land s_{n - 1}) \land g(t)\,, \label{symmetric split A} \\
f(s, t) &= (s_0 \land \dotsc \land s_{n - 1} \land t_0) \land g^*((t_1, \dotsc, t_{m-1}))\,. \label{symmetric split B}
\end{align}
Here, we only inquire a recursive circuit construction for the arising \aop~since
a delay-optimum symmetric \AND-tree can be constructed via Huffman coding \cite{huf52}
as explained in Remark~\ref{sym tree bound}.

Note that dualizing the splits (\ref{alternating split}), (\ref{symmetric split A}) and (\ref{symmetric split B})
yields analogous splits for $f^*$:
\begin{align}
f(s, t)   &= f(s, t') \land f^*(\widehat{t'}, t'') \label{alternating split dual} \\
f^*(s, t) &= (s_0 \lor \dotsc \lor s_{n - 1}) \lor g^*(t) \label{symmetric split A dual} \\
f^*(s, t) &= (s_0 \lor \dotsc \lor s_{n - 1}  \lor t_0) \lor g((t_1, \dotsc, t_{m-1})) \label{symmetric split B dual}
\end{align}

\subsection{Delay and Weight}\label{subsec::delay:weight}

We will realize extended \aop s with good delay using the
recursive methods defined in Subsection \ref{subsec::recursion_formulas_ats}.
For this, we need to classify inputs by their weight.

\begin{definition} \label{weight def}
 Given inputs $x = (x_0, \dotsc, x_{r-1})$ and arrival times $a$,
 for $i = 0, \dotsc, r-1$, the \emph{weight} of input $x_i$ is $W(x_i) := 2^{a(x_i)}$,
 and the \emph{weight} of $x$ is $W(x) := \sum_{i = 0}^{r-1} W(x_i)$.
\end{definition}

Note that in this definition, the weights $W(x_i)$ and $W(x)$ do not only depend on the inputs,
but also on the input arrival times.

\begin{rem} \label{W >= 1}
 By Definition \ref{weight def}, we have $W(x_i) \geq 1$ for any input $x_i$ with $a(x_i) \geq 0$.
\end{rem}

The definition of the weight is motivated by the fact that
for symmetric binary functions, i.e., \AND~trees or \OR~trees,
the optimum delay achievable for any realization can be derived from the weight of the inputs directly.

\begin{rem}\label{sym tree bound}
 A binary tree on inputs $x=(x_0, \dotsc, x_{r-1})$ with weight $W(x)$
 can be realized with delay $d$ if and only if Kraft's inequality $W(x) \leq 2^d$ is fulfilled \cite{Kraft},
 see also Golumbic \cite{Golumbic:1976}.
 A delay-optimum tree can be computed in runtime $\mathcal O(r \log_2 r)$ via a greedy algorithm
 based on Huffman coding \cite{huf52}.
A short proof of this can be found in Werber \cite{werber2007logic}.
\end{rem}

Note that when $t$ has at most $2$ entries, $f(s, t)$ is a symmetric
binary \AND~tree.

\begin{lemma}\label{t = (t_0)}
 For Boolean input variables $s$ and $t$ with $m \leq 2$ and arrival times $a$,
 we can realize $f(s, t)$ with delay $d$ if and only if
 $W(s) + W(t) \leq 2^d$. \qed
\end{lemma}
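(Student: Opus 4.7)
The plan is to observe that when $m \leq 2$, the extended AOP $f(s,t)$ collapses to a pure \AND-tree over all the inputs in $s$ and $t$, and then invoke Remark~\ref{sym tree bound} (Kraft's inequality) directly.

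First, I would expand the definition of $f(s,t)$ for the two cases. If $m = 1$, then $g(t) = t_0$ by Definition~\ref{aop}, so by unfolding Definition~\ref{ext aop} we obtain
\[
   f(s, t) = s_0 \land s_1 \land \dotsc \land s_{n-1} \land t_0\,.
\]
If $m = 2$, then $g((t_0, t_1)) = t_0 \land g^*((t_1)) = t_0 \land t_1$, and therefore
\[
   f(s, t) = s_0 \land s_1 \land \dotsc \land s_{n-1} \land t_0 \land t_1\,.
\]
In both cases, $f(s,t)$ is simply the conjunction of all $n + m$ input variables, which is a symmetric Boolean function realizable by binary \AND-trees.

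Second, I would apply Remark~\ref{sym tree bound}: a binary tree over the inputs $s_0, \dotsc, s_{n-1}, t_0, \dotsc, t_{m-1}$ with total weight $W(s) + W(t)$ can be realized with delay $d$ if and only if Kraft's inequality $W(s) + W(t) \leq 2^d$ holds. Since any circuit over the basis $\{\land,\lor\}$ realizing an \AND~of $r$ variables must (up to logical equivalence; more precisely, since the only monotone circuit computing this function is an \AND-tree, and the same input-set \AND-tree bound applies because the function depends essentially on every input) have the tree-realization bound as tight lower bound, the equivalence transfers to $f(s,t)$.

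There is no real obstacle here beyond the small bookkeeping in the case $m=2$. The only subtlety to be careful about is to argue that the "if and only if" of Remark~\ref{sym tree bound}, which is formally stated for binary trees, genuinely applies to arbitrary circuits over $\{\land, \lor\}$ computing $f(s,t)$. This is justified by the standard fact that any circuit realizing a pure conjunction of essential variables has at least the delay of an optimal \AND-tree on those variables (each input $x_i$ must be connected to the output through a path of at least $\lceil \log_2 W(x) \rceil - a(x_i)$ gates, forcing Kraft's inequality), so no further machinery is needed.
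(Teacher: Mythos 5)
Your proof is correct and matches the paper's approach, which gives no proof beyond noting (just before the lemma) that for $m \leq 2$ the function $f(s,t)$ collapses to a symmetric \AND-tree on all $n+m$ inputs and invoking Remark~\ref{sym tree bound}. Your extra care about the ``only if'' direction is well placed, since Remark~\ref{sym tree bound} is stated for trees while the lemma quantifies over arbitrary $\{\land,\lor\}$-circuits, but the parenthetical claim that ``the only monotone circuit computing this function is an \AND-tree'' is not literally true (a circuit may reuse gates or contain redundant subcircuits); what actually closes that direction is the general lower bound of $\lceil \log_2 W \rceil$ on the delay of any circuit depending on all its inputs, which the paper states earlier and which you also correctly invoke in your final sentence via the unfolding/Kraft argument.
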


For the case that $t$ has more than $2$ entries, we give an upper bound
on the delay of $f(s, t)$ in Section~\ref{sec::bound_v}.

\section{Bounding the Weight for Given Delay}\label{sec::bound_v}

We will prove an upper bound on the delay of \aop s by a reverse argument
similarly as in Grinchuk's proof for the case of uniform input arrival times \cite{Grinchuk}.
Grinchuk fixes a depth bound $d$ and the number $n$ of symmetric inputs $s$,
and determines how many alternating inputs $t$ an \aop~may have such that $f(s, t)$
can be realized with depth $d$.
Similarly, given symmetric inputs $s$ with a fixed weight $w$ and a fixed delay bound $d$,
we will determine for which alternating inputs $t$ a realization for $f(s, t)$ with delay $d$ can be guaranteed.
Since it is difficult to classify these $t$ exactly,
we distinguish different alternating inputs $t$ by their weight only.

In order facilitate the formulation of our main statement,
we fix a constant $\Ceta$, see also Remark~\ref{rem:ceta}.

\begin{definition}
Let $\Ceta := 1.9$ be a fixed constant.
\end{definition}

We aim at proving the following statement:

\begin{theorem} \label{bound on v(d, w)}
 Let $d, w \in \N$ with $d > 1$ and $0 \leq w < 2^{d-1}$ be given.
 Consider Boolean input variables $s$ and $t$ with $W(s) = w$ and
 \[W(t) \leq \Ceta \frac{2^{d-1} - w}{d \log_2 (d)}\,.\]
 Then, there is a circuit realizing $f(s, t)$ with delay at most $d$.
\end{theorem}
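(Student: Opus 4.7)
The plan is to prove the theorem by induction on $d$, using the recursive splits developed in Subsection~\ref{subsec::recursion_formulas_ats} together with Lemma~\ref{t = (t_0)} and the symmetric (Huffman) tree bound from Remark~\ref{sym tree bound}. For the base case $m\leq 2$, or more generally whenever the hypothesis forces $m\leq 2$, Lemma~\ref{t = (t_0)} applies directly: since $\Ceta/(d\log_2 d)<1$ for all $d\geq 2$, we have $W(s)+W(t)<2^{d-1}\leq 2^d$. Small-$d$ cases ($d=2,3$) either fall into this regime or are handled trivially, because the hypothesis forces $W(t)$ to be tiny.

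For the inductive step with $d\geq 3$ and $m\geq 3$, the main tool is the alternating split (\ref{alternating split}) (and its dual (\ref{alternating split dual})), which expresses $f^*(s,t)=f^*(s,t')\lor f(\widehat{t'},t'')$ with $t'=(t_0,\dots,t_{2k})$ and $t''=(t_{2k+1},\dots,t_{m-1})$. The outer $\lor$-gate adds delay $1$, so each sub-instance must be realizable with delay $d-1$. By the inductive hypothesis (invoked also in its dual form, valid by the $\{\land,\lor\}$-duality between $f$ and $f^*$), this requires
\[
w<2^{d-2},\qquad W(t')\leq \Ceta\,\frac{2^{d-2}-w}{(d-1)\log_2(d-1)},\qquad W(\widehat{t'})<2^{d-2},\qquad W(t'')\leq \Ceta\,\frac{2^{d-2}-W(\widehat{t'})}{(d-1)\log_2(d-1)}.
\]
The plan is to scan the inputs $t_0,t_1,\dots$ and choose the split index $k$ greedily so that $W(t')$ is as large as possible subject to the second inequality; since $\widehat{t'}$ contains only the odd-indexed entries of $t'$, its weight is substantially smaller than $W(t')$, which is what makes the recursion contract. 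Adding the second and fourth inequalities and comparing to the hypothesis $W(t)\leq\Ceta\,(2^{d-1}-w)/(d\log_2 d)$, one reduces the problem to verifying
\[
d\log_2 d\cdot W(\widehat{t'})\;\leq\;\bigl(d\log_2 d-(d-1)\log_2(d-1)\bigr)\,(2^{d-1}-w),
\]
i.e.\ a bound of the form $W(\widehat{t'})\lesssim(2^{d-1}-w)/d$, driven by the discrete derivative of $x\mapsto x\log_2 x$. The constant $\Ceta=1.9$ is calibrated so that this slack inequality holds uniformly in $d\geq 3$.

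When the alternating split is not available because $w\geq 2^{d-2}$ (so the left subcircuit $f^*(s,t')$ cannot itself be realized with delay $d-1$), the plan is to fall back on a symmetric split: either (\ref{symmetric split A}), building a Huffman \AND-tree on $s$ of delay at most $\lceil\log_2 w\rceil\leq d-1$ and recursing on $g(t)$ with $s'=()$, or (\ref{symmetric split B}), absorbing $t_0$ into the AND-tree when $t_0$'s arrival time is small enough to keep the tree balanced, then recursing on $g^*((t_1,\dots,t_{m-1}))$. The hard part, and the technical heart of the argument, is the weight bookkeeping behind the choice of $k$ in the alternating split: one must pack enough of $W(t)$ into $t'$ to satisfy the right sub-instance, while ensuring $W(\widehat{t'})$ is small enough for the left one, and simultaneously verifying that the constant $\Ceta=1.9$ absorbs the loss introduced by passing from $d$ to $d-1$ in the denominator $d\log_2 d$. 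I expect this calibration, and the case analysis distinguishing large vs.\ small $w$ (and possibly large vs.\ small first input weight $W(t_0)$), to be the most delicate part of the proof.
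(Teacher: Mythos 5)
Your overall plan is the standard one—induction on the delay $d$ via the alternating and symmetric splits, with a case split on the size of $w$ and of $W(t_0)$—and those pieces do appear in the paper. But there is a genuine gap at the center of the argument, and the paper itself flags exactly this issue (Remark~\ref{Grinchuk vs us}): the direct induction on $d$ with the hypothesis as stated in Theorem~\ref{bound on v(d, w)} does \emph{not} go through, and the paper's proof instead strengthens the induction hypothesis to Theorem~\ref{main theorem}, which adds the extra slack term $\frac{d-1}{d}\Lambda_t$ involving the weight of the last two entries of $t$.

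Concretely, the step where you ``add the second and fourth inequalities and compare to the hypothesis'' is not valid. Knowing that the sum of the two budget constraints for $W(t')$ and $W(t'')$ exceeds $W(t)$ does not imply that a single split index $k$ can be chosen satisfying both budgets simultaneously, because (i) the admissible split points are discrete (odd prefixes), so between consecutive feasible $t'$ the weight jumps by $W(t_{2k+1})+W(t_{2k+2})$, which can be arbitrarily large with non-uniform arrival times; and (ii) the budget for $W(t'')$ depends on $W(\widehat{t'})$, which itself changes with the split point, so the two constraints are not independent. Your greedy rule ``take $t'$ maximal subject to $W(t')\leq \Ceta\,\frac{2^{d-2}-w}{(d-1)\log_2(d-1)}$'' gives no lower bound on $W(t')$ when arrival times are arbitrary (e.g.\ one very heavy input $t_{2k+1}$ can force $t'$ to be very short), so you cannot conclude a useful upper bound on $W(t'')$. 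Likewise, the reduction to $W(\widehat{t'})\lesssim (2^{d-1}-w)/d$ is not something the hypothesis guarantees: the odd-indexed entries can carry most of the weight of $t'$. This is exactly the failure mode the paper describes and is the reason Grinchuk's uniform-arrival argument does not transfer directly; the fix is the strengthened invariant with $\Lambda_t$, plus a more delicate choice of $t^*$ that is either the greedy prefix $t'$ or one step longer ($\tilde t$), chosen according to whether $W(\tilde t)$ already fits the strengthened bound (see Lemma~\ref{small w}, Case~2(ii) and Figure~\ref{Lambda figure}).

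Your base case and the fallback via the symmetric splits when $w$ is large or $W(t_0)$ is large are fine and match the paper's Lemmata~\ref{base case}, \ref{large w}, and Case~1 of Lemma~\ref{small w}. But without the strengthened induction hypothesis, the inductive step for the alternating split cannot be closed, so the proposal as written does not constitute a proof.
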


We will prove Theorem~\ref{bound on v(d, w)} by induction on $d$.
Based on this, we will deduce the desired
upper bound of $\log_2 W + \log_2 \log_2 m + \log_2 \log_2 \log_2 m + 4.3$ on the delay of \aop s
in Section~\ref{sec::bound_delay}.

\begin{rem} \label{rem:ceta}
Note that the choice of the constant $\Ceta$ influences the additive constant (here $4.3$) in the delay bound.
For example, if we chose $\Ceta := 1$, we would need to replace the additive constant by $5$.
On the other hand, the higher $\Ceta$,
the larger the minimum value of $d$ is for which the induction step works,
which increases the proof complexity for the base case.

Most parts of the proof of Theorem~\ref{bound on v(d, w)} will work for any $\Ceta$ with $1 \leq \Ceta < 2$;
only at the end of the proof of Lemma~\ref{small w} we demand $\Ceta \leq 1.9$.
A slightly larger choice of $\Ceta$ would be possible, but using $\Ceta = 1.9$ keeps calculations simpler.
An improvement of the additive term to, e.g., $4.2$ would only be possible for $\Ceta \geq 1.992$,
for which we cannot prove Theorem~\ref{bound on v(d, w)}.
\end{rem}

For proving Theorem~\ref{bound on v(d, w)},
we would like to proceed by induction on $d$ making use of the restructuring
formulas presented in Section \ref{subsec::recursion_formulas_ats}.
The following remark explains why the inductive step from $d$ to $d+1$ does not work directly.

\begin{rem} \label{Grinchuk vs us}
Suppose that Theorem~\ref{bound on v(d, w)} holds for some $d > 1$ and all $0 \leq w < 2^{d-1}$.
In order to prove Theorem~\ref{bound on v(d, w)} for $d+1$,
we need to show that, given input variables $s$ and $t$ with $0 \leq w := W(s) < 2^d$ and $W(t) \leq \Ceta \frac{2^{d} - w}{(d+1) \log_2 (d+1)}$,
$f(s, t)$ can be realized with delay $d+1$.
In the case that $w < 2^{d-1}$, we would like to apply the alternating split
$f(s, t) = f(s, t') \land f^*(\widehat {t'}, t'')$
given by (\ref{alternating split dual}).
If we choose the prefix $t'$ of $t$ such that
\begin{equation} \label{W(t')}
W(t') \leq \Ceta \frac{2^{d-1} - w}{d \log_2 (d)}
\end{equation}
(whenever this is possible),
the induction hypothesis and the assumption $w < 2^{d-1}$ allow us to construct a circuit for $f(s, t')$ with delay $d$.
Thus, in order to construct a circuit with delay $d+1$ for $f(s, t)$,
it remains to prove that $f^*(\widehat {t'}, t'')$ admits a circuit with delay $d$.
Again, by induction hypothesis, we need to show that $W(t'') \leq \Ceta \frac{2^{d-1} - W(\widehat{t'})}{d \log_2 (d)}$.
But the only thing we know about $W(t'')$ is that
\begin{equation} \label{W(t'')}
 W(t'') = W(t) - W(t')\,.
\end{equation}

Even if we choose the prefix $t'$ maximal with (\ref{W(t')}),
this will not give us a meaningful upper bound on $W(t'')$ since $W(t')$ might be arbitrarily small in comparison to $W(t)$.

Note that this is what distinguishes our proof from Grinchuk's \cite{Grinchuk}:
  For arrival times all $0$,
  choosing $t'$ maximal with~(\ref{W(t')}) works well, since then, by maximality, we have
  $W(t') > \Ceta \frac{2^{d-1} - w}{d \log_2(d)} - 2$,
  hence equation~(\ref{W(t'')}) yields $W(t'') = W(t) - W(t') < W(t) - \Ceta \frac{2^{d-1} - w}{d \log_2(d)} - 2$.
  It turns out that this upper bound on $W(t^{**})$ suffices
  to prove that $f^*(\widehat {t'}, t'')$ can be realized with delay $d$.

  When arbitrary arrival times are present, a different proof idea is needed.
\end{rem}

Thus, instead of proving Theorem~\ref{bound on v(d, w)} via induction on $d$,
we strengthen the induction hypothesis and prove the stronger Theorem~\ref{main theorem}.

\begin{definition}
 Let $m \in \N$ with $m > 0$.
 For inputs $t = (t_0, \dotsc, t_{m-1})$ with arrival times $a$,
 we denote by $\Lambda_t$ the weight of the last two (or fewer) entries of $t$,
 i.e.,
 \[\Lambda_t := \begin{cases}
                W(t_0),               & m = 1\,, \\
                W(t_{m-2}) + W(t_{m-1}), & m > 1\,.
               \end{cases}\]
\end{definition}

\begin{theorem}\label{main theorem}
 Let $d, w \in \N$ with $d > 1$ and $0 \leq w < 2^{d-1}$ be given.
 Consider Boolean input variables $s$ and $t$ with $W(s) = w$ and
 \begin{equation}
  W(t) \leq \Ceta \frac{2^{d-1} - w}{d \log_2 (d)} + \frac{d-1}{d} \Lambda_t\,. \label{main req}
 \end{equation}
 Then, there is a circuit realizing $f(s, t)$ with delay at most $d$.
\end{theorem}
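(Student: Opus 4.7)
The plan is to prove Theorem~\ref{main theorem} by strong induction on $d$, treating a base case at $d = 2$ (where $w \in \{0,1\}$ and the admissible $W(t)$ is small enough that the instance can be handled directly via Lemma~\ref{t = (t_0)} or an elementary tree construction) and then executing an inductive step from $d$ to $d+1$ that branches on the structure of the instance. In all inductive cases the circuit for $f(s,t)$ is built by combining two sub-circuits of delay $d$ via one extra gate, using one of the splits from Subsection~\ref{subsec::recursion_formulas_ats}.

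For the inductive step I would first dispose of trivial cases: when $m \leq 2$, apply Lemma~\ref{t = (t_0)} after verifying that hypothesis~(\ref{main req}) implies $W(s) + W(t) \leq 2^{d+1}$. Then I would distinguish the interesting cases by the size of $w$ relative to $2^{d-1}$. If $w$ is close to its upper limit $2^d$, use the symmetric split~(\ref{symmetric split B}), folding $t_0$ into the symmetric block: the new symmetric weight is $w + W(t_0)$, the new alternating input is $(t_1,\dots,t_{m-1})$ with new final-pair weight $\Lambda_{(t_1,\dots,t_{m-1})}$, and one checks that the hypothesis at level $d$ carries over. If $w$ is smaller, apply the dualized alternating split~(\ref{alternating split dual}), $f(s,t) = f(s,t') \land f^*(\widehat{t'},t'')$, and try to apply the induction hypothesis with delay $d$ to each of $f(s,t')$ and $f^*(\widehat{t'},t'')$ (using duality for $f^*$).

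The main obstacle, as flagged in Remark~\ref{Grinchuk vs us}, is choosing the split position $k$ so that \emph{both} subinstances satisfy the strengthened hypothesis simultaneously. For $f(s,t')$ we need $W(t') \leq \zeta\,\frac{2^{d-1}-w}{d\log_2 d} + \frac{d-1}{d}\Lambda_{t'}$, and for $f^*(\widehat{t'},t'')$ we need $W(\widehat{t'}) < 2^{d-1}$ together with $W(t'') \leq \zeta\,\frac{2^{d-1}-W(\widehat{t'})}{d\log_2 d} + \frac{d-1}{d}\Lambda_{t''}$. My plan is to take $k$ maximal subject to the first inequality. The maximality of $k$ then yields a lower bound on $W(t')$ of the form $\zeta\,\frac{2^{d-1}-w}{d\log_2 d} + \frac{d-1}{d}\Lambda_{t'} - \Lambda_{(t_{2k+1},t_{2k+2})}$ (the missing ``next two inputs'' being the obstruction to extending the prefix further). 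This is precisely where the extra $\frac{d-1}{d}\Lambda_t$ slack in the strengthened hypothesis pays off, because it converts the otherwise unbounded gap between $W(t')$ and the target into a controllable $\Lambda$-term that depends only on two inputs of $t$, and that can be charged against the term $\frac{d}{d+1}\Lambda_t$ available at level $d+1$.

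The hard part will be the arithmetic reconciliation: plugging the maximality lower bound on $W(t')$ into $W(t'') = W(t) - W(t')$, estimating $W(\widehat{t'}) \leq \tfrac{1}{2}W(t')$ or a similarly useful bound, and verifying that the resulting bound on $W(t'')$ is at most $\zeta\,\frac{2^{d-1}-W(\widehat{t'})}{d\log_2 d} + \frac{d-1}{d}\Lambda_{t''}$ using the hypothesis at level $d+1$. I expect this to require a separate sub-lemma for the regime of small $w$ and another for $w$ so small that even the maximal $t'$ exhausts $t$, in which latter case one instead reverts to a symmetric split or directly applies the induction hypothesis to the whole $t$. It is precisely at this arithmetic step that the numerical value $\zeta = 1.9$ enters, as noted in the remark after the theorem; larger values would break the quantitative balance between the $\frac{1}{d\log_2 d}$ factors at consecutive induction levels.
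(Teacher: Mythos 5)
Your high-level strategy is the same as the paper's: induction on $d$ with a strengthened hypothesis including a $\Lambda_t$-term, a case split on the size of $w$, an alternating split $f(s,t)=f(s,t^*)\land f^*(\widehat{t^*},t^{**})$ with a greedily chosen odd-length prefix $t^*$, and a final arithmetic reconciliation where $\zeta=1.9$ enters. However, several of the details you outline are wrong or incomplete in ways that would derail the argument.

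First, the base case must cover $d\leq 3$, not $d=2$. The inductive step (the analogue of Theorem~\ref{main theorem d+1}) uses estimates that fail for $d=2$: in the small-$w$ branch, the argument relies on $d+1\geq 4>2^{\zeta}$ to get $W(t)+w\leq 2^d$ from Lemma~\ref{lambda lemma} (for $d=2$, $d+1=3<2^{1.9}$ and the weaker bound $2^{d}/\log_2 d$ does not suffice), and the final Claim needs $\log_2(d+1)-\log_2 d\geq 1/d$, which is Remark~\ref{log remark}'s bound and requires $d\geq 3$. The paper therefore handles $d=2,3$ directly by showing $m\leq 2$ is forced (or an explicit small circuit exists).

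Second, you attribute the symmetric split (\ref{symmetric split B}) to the regime where $w$ is close to $2^d$. That is the wrong case. When $w\geq 2^{d-1}$ the paper uses (\ref{symmetric split A}), which leaves $t_0$ in the alternating instance; split~(\ref{symmetric split B}), which folds $t_0$ into the symmetric tree, requires $w+W(t_0)\leq 2^d$, and that can easily fail when $w$ is already near $2^d$. The paper invokes split~(\ref{symmetric split B}) only in the sub-case where $w<2^{d-1}$ but $W(t_0)$ is so large that no acceptable prefix starting with $t_0$ exists (there, $w+W(t_0)\leq W(t)+w\leq 2^d$ follows from Lemma~\ref{lambda lemma} because $d+1\geq 4>2^\zeta$). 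Your proposal omits this large-$W(t_0)$ case entirely.

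Third, the maximality lower bound you state is incorrect. If $t'=(t_0,\dots,t_{2k})$ is maximal with $W(t')\leq \zeta\frac{2^{d-1}-w}{d\log_2 d}+\frac{d-1}{d}\Lambda_{t'}$, then failure to extend to $\tilde t=(t_0,\dots,t_{2k+2})$ gives $W(\tilde t)>\zeta\frac{2^{d-1}-w}{d\log_2 d}+\frac{d-1}{d}\Lambda_{\tilde t}$, hence $W(t')=W(\tilde t)-\Lambda_{\tilde t}>\zeta\frac{2^{d-1}-w}{d\log_2 d}-\frac{1}{d}\Lambda_{\tilde t}$; your formula keeps a spurious $\frac{d-1}{d}\Lambda_{t'}$ term and subtracts the full $\Lambda_{(t_{2k+1},t_{2k+2})}$ instead of $\frac{1}{d}\Lambda_{\tilde t}$. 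Also note that because $\Lambda_{t'}$ is not monotone in the prefix length, the set of valid $k$ need not be an initial segment, which makes your maximality criterion subtler to handle cleanly than the paper's; the paper instead takes $t'$ maximal with the $\Lambda$-free condition $W(t')\leq\zeta\frac{2^{d-1}-w}{d\log_2 d}$ (where monotonicity is automatic) and then decides between $t^*:=t'$ and $t^*:=\tilde t$, which gives the clean two-sided bound $W(t^*)=\zeta\frac{2^{d-1}-w}{d\log_2 d}+\delta$ with $-\frac{1}{d}\Lambda_{\tilde t}\leq\delta\leq\frac{d-1}{d}\Lambda_{\tilde t}$.

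Finally, the estimate $W(\widehat{t'})\leq\tfrac12 W(t')$ that you suggest is not true in general (one odd-indexed input can dominate $t'$). The paper only uses the trivial $W(\widehat{t^*})\leq W(t^*)$, and separately shows $W(\widehat{t^*})<2^{d-1}$ via requirement~(\ref{main req - lambda}) and the fact that $t^*$ avoids the last two elements of $t$. You would need the same replacement, together with the paper's crucial observation that $t_{2k+1},t_{2k+2},t_{m-2},t_{m-1}$ are four distinct inputs not in $t'$, which is what allows $\Lambda_{\tilde t}+\frac{\Lambda_t}{d+1}$ to be charged against $\zeta\frac{2^d-w}{(d+1)\log_2(d+1)}-W(t')$ in the final calculation; your short sub-case for $|t''|\leq 3$ (the paper's Case~2(i)) is needed exactly to guarantee this disjointness in the main branch.
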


Note that since $\Lambda_t \geq 0$, Theorem~\ref{main theorem} implies Theorem~\ref{bound on v(d, w)}.
The proof of Theorem~\ref{main theorem} is the most important part of this paper and covers the rest of this section.
First, we observe two ways to express requirement~(\ref{main req}) differently.

\begin{rem}
 Assuming the conditions of Theorem~\ref{main theorem}, the following statements are equivalent to
 requirement~(\ref{main req}):
\begin{align}
   \sum_{i = 0}^{m - 3} W(t_i) + \frac{\Lambda_t}{d} &\leq \Ceta \frac{2^{d-1} - w}{d\log_2(d)} \label{main req - lambda} \\
   d \cdot \sum_{i = 0}^{m - 3} W(t_i) + \Lambda_t &\leq \Ceta \frac{2^{d-1} - w}{\log_2(d)} \label{main req * d}
\end{align}
Statement~(\ref{main req - lambda}) can be obtained from requirement~(\ref{main req}) by
subtracting $\frac{d-1}{d} \Lambda_t$.
From this, multiplication with $d$ yields statement (\ref{main req * d}).
\end{rem}

Next, we give an upper bound on the weight $W(t) + w$.

\begin{lemma} \label{lambda lemma}
Assuming the conditions of Theorem \ref{main theorem}, we have
\begin{align}
   W(t) + w \leq \begin{cases}
                      2^{d-1} & \text{ if } d \geq 2^{\Ceta}\,, \\
                      \frac{2^d}{\log_2 (d)} & \text{ otherwise}\,.
                   \end{cases} \label{bound W(t) + w}
\end{align}
\proof
Using inequality~(\ref{main req * d}), we obtain
\begin{equation} \label{proof bound W(t) + w}
  W(t) + w \stackrel{(\ref{main req * d})}{\leq} \Ceta \frac{2^{d-1} - w}{\log_2(d)} + w
             = \frac{\Ceta 2^{d-1} + (\log_2(d) - \Ceta) w}{\log_2(d)}\,.
\end{equation}
If $d \geq 2^{\Ceta}$, the condition $w < 2^{d-1}$ yields
\[
  W(t) + w \stackrel{(\ref{proof bound W(t) + w})}{\leq }\frac{\Ceta 2^{d-1} + (\log_2(d) - \Ceta) w}{\log_2(d)}
    \leq \frac{2^{d-1} ( \Ceta + \log_2(d) - \Ceta)}{\log_2(d)} = 2^{d-1}\,.
\]
Otherwise, if $d < 2^{\Ceta}$, the condition $w \geq 0$ yields
\[
  W(t) + w \stackrel{(\ref{proof bound W(t) + w})}{\leq }\frac{\Ceta 2^{d-1} + (\log_2(d) - \Ceta) w}{\log_2(d)}
    \leq \frac{\Ceta 2^{d-1} }{\log_2(d)}
    \stackrel{\Ceta < 2}{<} \frac{2^{d} }{\log_2(d)}\,. \qedhere
\]
\end{lemma}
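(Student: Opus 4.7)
The plan is to derive the bound from the reformulation (\ref{main req * d}) of hypothesis (\ref{main req}) by a short case split on the sign of $\log_2(d) - \Ceta$. Since $d > 1$ implies $d \geq 1$, and all summands $W(t_i)$ are non-negative, I first upper-bound
\[
W(t) \;=\; \sum_{i=0}^{m-3} W(t_i) + \Lambda_t \;\leq\; d \sum_{i=0}^{m-3} W(t_i) + \Lambda_t \;\stackrel{(\ref{main req * d})}{\leq}\; \Ceta \, \frac{2^{d-1} - w}{\log_2(d)}\,.
\]
Adding $w$ to both sides and collecting terms yields
\[
W(t) + w \;\leq\; \frac{\Ceta \cdot 2^{d-1} + \bigl(\log_2(d) - \Ceta\bigr)\, w}{\log_2(d)}\,,
\]
and the key observation is that the right-hand side is linear in $w$ with slope $(\log_2(d) - \Ceta)/\log_2(d)$, whose sign changes exactly at $d = 2^{\Ceta}$.

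The two cases in the statement then correspond to maximizing this linear function over the interval $0 \leq w < 2^{d-1}$ provided by the hypothesis. If $d \geq 2^{\Ceta}$, the slope is non-negative, so the right endpoint $w = 2^{d-1}$ gives the maximum; substituting collapses the numerator to $2^{d-1} \log_2(d)$ and yields $W(t) + w \leq 2^{d-1}$. If instead $d < 2^{\Ceta}$, the slope is negative and the left endpoint $w = 0$ is the maximizer, giving $W(t) + w \leq \Ceta \cdot 2^{d-1} / \log_2(d) < 2^d / \log_2(d)$, where the last strict inequality uses $\Ceta < 2$.

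I do not foresee any real obstacle: the argument is an algebraic manipulation of (\ref{main req * d}) followed by an endpoint maximization of an affine function of $w$. The only insight worth highlighting is that $d = 2^{\Ceta}$ is precisely the threshold where the sign of the $w$-coefficient flips, which is why the case split in the statement is organized around $d \geq 2^{\Ceta}$ versus $d < 2^{\Ceta}$; this is also the reason the strong term $2^{d-1}$ appears in the first regime while only the weaker $2^d/\log_2(d)$ is available in the second.
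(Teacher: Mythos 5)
Your proof is correct and follows essentially the same route as the paper's: both derive the affine bound $W(t)+w \leq \frac{\Ceta 2^{d-1} + (\log_2 d - \Ceta)w}{\log_2 d}$ from (\ref{main req * d}) and then split on the sign of $\log_2(d)-\Ceta$, using $w < 2^{d-1}$ in one case and $w \geq 0$ together with $\Ceta < 2$ in the other. Your framing of the case split as endpoint maximization of an affine function in $w$ is a slightly more conceptual way of phrasing the identical computation, and you make explicit the use of $d \geq 1$ to pass from $W(t)$ to $d\sum_{i=0}^{m-3}W(t_i)+\Lambda_t$, which the paper leaves implicit.
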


The equivalent requirements (\ref{main req}), (\ref{main req - lambda}) and (\ref{main req * d})
as well as Lemma \ref{lambda lemma}
will be used extensively when proving Theorem \ref{main theorem}.
For this proof, we proceed by induction on $d$.
In Lemma~\ref{base case}, we will show as a base case
that Theorem \ref{main theorem} holds for $d \leq 3$.
Then, in Theorem~\ref{main theorem d+1}, we will prove the inductive step:
Assuming that Theorem \ref{main theorem} holds for some $d \geq 3$
and all $0 \leq w < 2^{d-1}$,
we will prove the statement for $d+1$ and all $0 \leq w < 2^d$.

\begin{lemma}\label{base case}
Assuming the conditions of Theorem \ref{main theorem} for $d = 2, 3$,
we can realize $f(s, t)$ with delay $d$.
\begin{proof}
 First assume that $m \leq 2$.
 Recall that in this case, $f(s, t)$ is a symmetric binary tree.
 By inequality~(\ref{bound W(t) + w}), we know that $W(t) + w \leq 2^{d}$.
 Hence, by Remark \ref{sym tree bound}, we can realize $f(s, t)$ with delay $d$
 using Huffman coding.

 Now let $m \geq 3$.
 Requirement (\ref{main req - lambda}) yields
 \begin{equation}
  1 + \frac{2}{d}
  \stackrel{m \geq 3, \text{Rem. } \ref{W >= 1}}{\leq} \sum_{i = 0}^{m-3} W(t_i) + \frac{\Lambda_t}{d}
  \stackrel{(\ref{main req - lambda})}{\leq} \Ceta \frac{2^{d-1} - w}{d\log_2(d)}  \label{small d contr}
  \stackrel{\zeta < 2}{<} \frac{2^d - 2w}{d\log_2(d)}\,.
 \end{equation}
 For $d = 2$, this leads to the contradiction
 \[2 = 1 + \frac{2}{2}
 \stackrel{(\ref{small d contr})}{<} \frac{4 - 2w}{2 \cdot 1}
 \stackrel{w \geq 0}{\leq} 2\,,\]
 i.e., for $d = 2$, we always have $m \leq 2$ and have already proven the required statement.

 Similarly, if $d = 3$, we obtain
 \[\frac{5}{3} = 1 + \frac{2}{3}
 \stackrel{(\ref{small d contr})}{<} \frac{8 - 2w}{3 \cdot \log_2(3)}
 < \begin{cases}
    2 & w = 0\\
    \frac{4}{3} & w \geq 1
   \end{cases}\,.\]
 In the case that $w \geq 1$, this is a contradiction;
 and for $w = 0$, the only remaining case is $m = 3$ with $t_0 = t_1 = t_2 = 0$
 for which $f(s, t) = t_0 \land (t_1 \lor t_2)$ can obviously be constructed with delay $2 < 3$.
\end{proof}
\end{lemma}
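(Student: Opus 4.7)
The plan is to leverage the fact that when $d$ is this small, the hypothesis of Theorem~\ref{main theorem} is already strong enough to force $m \leq 2$, at which point $f(s,t)$ degenerates to a symmetric \AND~tree that is handled by the machinery of Section~\ref{subsec::delay:weight}.

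For the $m \leq 2$ branch I would invoke Lemma~\ref{t = (t_0)}, which reduces the question to verifying $W(s) + W(t) \leq 2^d$. Since both $d = 2$ and $d = 3$ satisfy $d < 2^{\Ceta}$, the second case of Lemma~\ref{lambda lemma} applies and gives $W(t) + w \leq 2^d / \log_2(d) \leq 2^d$ (using $\log_2 d \geq 1$). A Huffman tree via Remark~\ref{sym tree bound} then realizes $f(s,t)$ within delay $d$.

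The crux is to rule out $m \geq 3$. By Remark~\ref{W >= 1} every $W(t_i) \geq 1$, so when $m \geq 3$ the sum in the equivalent form (\ref{main req - lambda}) of the hypothesis contains at least one positive term and $\Lambda_t \geq 2$, which yields
\[
  1 + \frac{2}{d} \;\leq\; \Ceta \cdot \frac{2^{d-1} - w}{d \log_2 d}.
\]
For $d = 2$ the right-hand side is at most $\Ceta \cdot \frac{2}{2} = 1.9 < 2$, a direct contradiction. For $d = 3$ and any $w \geq 0$ it is at most $\frac{1.9 \cdot 4}{3 \log_2 3} < \frac{5}{3}$, again a contradiction. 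Hence $m \leq 2$ must hold whenever the hypotheses of the theorem are met for $d \in \{2,3\}$, and the preceding paragraph concludes.

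I expect the main obstacle to be more psychological than technical: it is tempting to worry about a zoo of small-$m$ corner cases with awkward arrival times. The key observation is that plugging in the concrete values $\Ceta = 1.9$ and $\log_2 3 \approx 1.585$ makes (\ref{main req - lambda}) too tight to accommodate $m \geq 3$, so the entire lemma collapses to a two-line numerical check plus a single appeal to Huffman coding.
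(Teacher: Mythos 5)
Your proof is correct and follows the same two-part structure as the paper: realize $m \leq 2$ as a Huffman tree via Lemma~\ref{lambda lemma} and Remark~\ref{sym tree bound}, then try to rule out $m \geq 3$ using the lower bound $\sum_{i=0}^{m-3} W(t_i) + \Lambda_t/d \geq 1 + 2/d$ from Remark~\ref{W >= 1}. The one genuine difference is in the $d = 3$ step. You plug in $\Ceta = 1.9$ directly and obtain $\Ceta \cdot \frac{4}{3\log_2 3} \approx 1.598 < 5/3$, which rules out $m \geq 3$ for all $w \geq 0$. The paper instead loosens $\Ceta$ to $\Ceta < 2$, getting $\frac{8 - 2w}{3\log_2 3}$, which for $w = 0$ is about $1.683 > 5/3$, so no contradiction follows; the paper is thus forced to handle the residual case $m = 3$, $w = 0$, all arrival times zero, observing that $t_0 \land (t_1 \lor t_2)$ trivially has delay $2$. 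Your tighter numeric bound eliminates that case entirely, and your proof is correspondingly shorter. The trade-off is minor: your argument requires $\Ceta < \tfrac{5\log_2 3}{4} \approx 1.98$, whereas the paper's version works for any $\Ceta < 2$, which is consistent with their stated goal of keeping most of the development generic in $\Ceta$; since they fix $\Ceta = 1.9$ anyway, nothing is lost by your streamlining.
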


\begin{theorem}\label{main theorem d+1}
 Assume inductively that for some $d \geq 3$ and all $0 \leq w < 2^{d-1}$,
 Theorem \ref{main theorem} holds.
 Then, for inputs $s$ and $t$ with $w := W(s)$ such that $0 \leq w < 2^d$ and
 \begin{equation}
   W(t) \leq \Ceta \frac{2^{d} - w}{(d+1)\log_2(d+1)} + \frac{d}{d+1} \Lambda_t\,, \label{main req d+1}
 \end{equation}
 we can realize $f(s, t)$ with delay $(d+1)$.
\end{theorem}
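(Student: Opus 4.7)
The plan is to carry out the inductive step by case analysis on $w$ (and, within cases, on the structure of $t$), applying in each case one of the splits from Subsection~\ref{subsec::recursion_formulas_ats} so that the sub-instances produced meet the hypothesis of Theorem~\ref{main theorem} at delay $d$. First, I would dispose of the boundary case $m \leq 2$ directly: Lemma~\ref{t = (t_0)} together with the estimate~(\ref{bound W(t) + w}) of Lemma~\ref{lambda lemma} (applied with $d+1$ in place of $d$) already yields a circuit of delay $\leq d+1$. So we may assume $m \geq 3$. The principal dichotomy is then between large $w$ (say $w \geq 2^{d-1}$, where the inductive hypothesis cannot be invoked on $s$ at delay $d$) and small $w$.

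In the large-$w$ case, I would use a symmetric split, either~(\ref{symmetric split A}) to peel off all symmetric inputs, or~(\ref{symmetric split B}) to additionally peel off $t_0$, obtaining a sub-instance with $n=0$ which by the inductive hypothesis admits a delay-$d$ realization. The resulting output, of effective weight $2^d$, is combined via a Huffman-coded symmetric \AND-tree (Remark~\ref{sym tree bound}) with the remaining symmetric inputs of total weight at most $w + 2^d < 2^{d+1}$, giving overall delay $\leq d+1$ by Kraft's inequality. One must check that~(\ref{main req d+1}) implies the hypothesis~(\ref{main req}) of the induction for the sub-instance with $w=0$ at delay $d$; this should follow from the case assumption $w \geq 2^{d-1}$ and elementary monotonicity of $d\log_2 d$.

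In the small-$w$ case, I would apply the alternating split~(\ref{alternating split dual}) with $t' = (t_0, \dotsc, t_{2k})$, choosing $k$ as large as possible such that $f(s, t')$ still satisfies the hypothesis of the induction at delay $d$. This yields a delay-$d$ realization of $f(s, t')$, and I would then use maximality of $k$ together with $W(t'') = W(t) - W(t')$ and $W(\widehat{t'}) \leq W(t')$ to verify the dual hypothesis for $f^*(\widehat{t'}, t'')$ at delay $d$; duality of the basis $\{\land, \lor\}$ then transfers the induction from $f$ to $f^*$ on this sub-instance. The top-level \AND~combines the two circuits for overall delay $\leq d+1$.

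The central obstacle, flagged explicitly in Remark~\ref{Grinchuk vs us}, is that with non-uniform arrival times the maximality of $k$ gives only a lower bound on $W(t')$ that is weakened by the potentially very large weight of the next excluded entries of $t$; the naive Grinchuk-style argument for uniform inputs therefore breaks down. The strengthening by the bonus term $\tfrac{d-1}{d}\Lambda_t$ built into Theorem~\ref{main theorem} is precisely the compensation needed: the bonus absorbs the heavy tail of $t'$ so that the resulting estimate on $W(t'')$ is tight enough to satisfy the dual condition for $f^*(\widehat{t'}, t'')$. I expect several subcases to need separate treatment, e.g.\ when $k=0$ is forced (no valid non-empty prefix), when a single extremely heavy $t_i$ dominates $W(t)$, or at the boundary between the small- and large-$w$ regimes; a symmetric split via~(\ref{symmetric split B}) or a direct appeal to Lemma~\ref{lambda lemma} should close the remaining gaps. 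The choice $\Ceta = 1.9 < 2$ and the assumption $d \geq 3$ provide just enough slack for the arithmetic to work in each case.
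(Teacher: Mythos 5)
Your proposal matches the paper's overall architecture: the dichotomy between $w \geq 2^{d-1}$ (handled by symmetric split~(\ref{symmetric split A}), which is exactly the paper's Lemma~\ref{large w}) and $w < 2^{d-1}$, the sub-case where $t_0$ alone already exceeds the prefix budget (handled by~(\ref{symmetric split B}), i.e.\ the paper's Case~1), the use of a maximal odd-length prefix $t'$, and the observation that Remark~\ref{Grinchuk vs us} shows the naive Grinchuk maximality argument breaks down and that the $\Lambda_t$ bonus term in~(\ref{main req}) must compensate. Your large-$w$ argument and Case~1 argument are essentially the paper's and would go through with the help of Lemma~\ref{sub-calc}.

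However, there is a genuine gap in the small-$w$ alternating-split case (the paper's Case~2~(ii)), which is the heart of the theorem. You propose to ``use maximality of $k$ together with $W(t'') = W(t)-W(t')$'' and to let the bonus term ``absorb the heavy tail of $t'$,'' but this is the very argument whose failure is the content of Remark~\ref{Grinchuk vs us}: maximality of $t'$ gives no useful lower bound on $W(t')$ when the first excluded entry $t_{2k+1}$ is heavy, and the bonus term in~(\ref{main req}) attaches to $\Lambda_t$ (the \emph{last} two entries of $t$), so it does not directly help with a heavy $t_{2k+1}$. The paper's actual fix is a further case split that you do not describe: it introduces the extended prefix $\tilde t = (t_0,\dots,t_{2k+2})$ and sets $t^* := \tilde t$ if $W(\tilde t) \leq \Ceta\frac{2^{d-1}-w}{d\log_2 d} + \frac{d-1}{d}\Lambda_{\tilde t}$ (so the bonus on $\Lambda_{\tilde t}$, not $\Lambda_t$, lets the longer prefix still be realized at delay $d$), and $t^* := t'$ otherwise (in which case the failure of that inequality gives the lower bound $W(t') > \Ceta\frac{2^{d-1}-w}{d\log_2 d} - \frac{1}{d}\Lambda_{\tilde t}$). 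Both branches yield the sandwich~(\ref{delta formula}) on $W(t^*)$, which is what the subsequent claim and its two-page calculation actually hinge on. It also requires insisting that $t''$ have at least four elements so $\Lambda_{\tilde t}$ and $\Lambda_t$ are disjoint, with the short $|t''| \leq 3$ residue handled separately (Case~2~(i)). Without this two-prefix comparison and the resulting bound~(\ref{delta formula}), one cannot verify condition~(\ref{to show}) for $f^*(\widehat{t^*}, t^{**})$, and the induction does not close. This is a missing idea rather than a detail: it is precisely the piece that distinguishes the present proof from Grinchuk's and makes the $\Lambda$-strengthened induction hypothesis do real work.
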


As a sub-calculation for the proof of this theorem, we need the following lemma.

\begin{lemma} \label{sub-calc}
In the situation of Theorem \ref{main theorem d+1}, we have
{ 
  \newcommand{\ca}[1]{&\customalign{====}{#1}}
\begin{align*}
 \ca{} \Ceta \frac{2^{d-1}}{d \log_2(d)}  +  \frac{d-1}{d} \Lambda_t  -  \Ceta \frac{2^{d} - w}{(d+1)\log_2(d+1)}  -  \frac{d}{d+1} \Lambda_t \\
 \ca{\geq} \Ceta \frac{2^{d-1}\log_2(d+1) - (2^d - w) \log_2(d)}{d \log_2(d) \log_2(d+1)}\,.
\end{align*}
} 
\proof
Using the bound on $\Lambda_t$ implied by inequality~(\ref{main req * d}), we calculate
 { 
  \newcommand{\ca}[1]{&\customalign{===}{#1}}
\begin{align*}
 \ca{}  \Ceta \frac{2^{d-1}}{d \log_2(d)}  +  \frac{d-1}{d} \Lambda_t  -  \Ceta \frac{2^{d} - w}{(d+1)\log_2(d+1)} - \frac{d}{d+1} \Lambda_t \\
  \ca{=} \Ceta \frac{2^{d-1}(d+1) \log_2(d+1) - (2^d - w) d \log_2(d)}{d(d+1) \log_2(d) \log_2(d+1)} - \frac{1}{d(d+1)} \Lambda_t \\
  \ca{\stackrel{(\ref{main req * d})}{\geq}}
         \Ceta \frac{2^{d-1}(d+1) \log_2(d+1) - (2^d - w) d \log_2(d)}{d(d+1) \log_2(d) \log_2(d+1)} - \frac{1}{d(d+1)} \Ceta \frac{2^d - w}{\log_2(d+1)} \\
  \ca{=} \Ceta \frac{2^{d-1}\log_2(d+1) - (2^d - w) \log_2(d)}{d \log_2(d) \log_2(d+1)}\,. \qedhere
  \end{align*}
  } 
\end{lemma}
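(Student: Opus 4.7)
The lemma is a purely algebraic rearrangement followed by a single substitution of the $\Lambda_t$ bound implied by inequation~(\ref{main req * d}). My plan is therefore careful bookkeeping rather than a conceptual argument.

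First I would split the left-hand side into its $\Lambda_t$-dependent and $\Lambda_t$-independent parts. The combined coefficient of $\Lambda_t$ is
$$\frac{d-1}{d} - \frac{d}{d+1} = \frac{(d-1)(d+1) - d^2}{d(d+1)} = -\frac{1}{d(d+1)},$$
so the left-hand side equals
$$\Ceta \frac{2^{d-1}}{d\log_2(d)} - \Ceta \frac{2^d - w}{(d+1)\log_2(d+1)} - \frac{\Lambda_t}{d(d+1)}.$$
I would then merge the two $\Ceta$-scaled terms over the common denominator $d(d+1)\log_2(d)\log_2(d+1)$, obtaining a single fraction with numerator $\Ceta\bigl(2^{d-1}(d+1)\log_2(d+1) - (2^d - w)\,d\log_2(d)\bigr)$.

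The only step that invokes a hypothesis is the bound on $\Lambda_t$. Since we are in the setting of Theorem~\ref{main theorem d+1}, assumption~(\ref{main req d+1}) is exactly~(\ref{main req}) with $d$ replaced by $d+1$; the same manipulation that produced~(\ref{main req * d}) now yields $(d+1)\sum_{i=0}^{m-3} W(t_i) + \Lambda_t \leq \Ceta(2^d - w)/\log_2(d+1)$, and dropping the non-negative sum gives $\Lambda_t \leq \Ceta(2^d - w)/\log_2(d+1)$. Substituting this upper bound into the negative term $-\Lambda_t/(d(d+1))$ produces a lower bound; placed over the common denominator above, it contributes an extra subtracted numerator term $\Ceta(2^d - w)\log_2(d)$. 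Collecting, the two subtracted pieces combine as $(2^d - w)\,d\log_2(d) + (2^d - w)\log_2(d) = (2^d - w)(d+1)\log_2(d)$, after which the factor $(d+1)$ cancels between numerator and denominator and the claimed right-hand side drops out.

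The only real obstacle is sign discipline: the upper bound on $\Lambda_t$ is substituted into a term carrying a minus sign, so the inequality direction must be flipped carefully, and the coefficient $-1/(d(d+1))$ must not be miscomputed. No further conceptual difficulty is expected; the lemma is a technical rearrangement whose purpose is to isolate the numerator quantity $2^{d-1}\log_2(d+1) - (2^d - w)\log_2(d)$, which will be the driving expression in the induction step of Theorem~\ref{main theorem d+1}.
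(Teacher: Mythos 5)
Your proposal is correct and follows essentially the same route as the paper: split off the $\Lambda_t$-coefficient $-1/(d(d+1))$, combine the two $\Ceta$-terms over the common denominator $d(d+1)\log_2(d)\log_2(d+1)$, substitute the bound $\Lambda_t \leq \Ceta(2^d - w)/\log_2(d+1)$ coming from the $(d+1)$-level form of~(\ref{main req * d}), and observe that the resulting subtracted pieces factor as $(2^d-w)(d+1)\log_2(d)$ so that $(d+1)$ cancels. You are in fact slightly more explicit than the paper about the fact that the $\Lambda_t$ bound used is~(\ref{main req * d}) with $d$ replaced by $d+1$, which is the version valid under hypothesis~(\ref{main req d+1}); otherwise the two arguments coincide.
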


This is the only ingredient needed to prove Theorem \ref{main theorem d+1} for the
case that $2^{d-1} \leq w < 2^d$.

\begin{lemma} \label{large w}
 Theorem \ref{main theorem d+1} holds for all $w$ satisfying $2^{d-1} \leq w < 2^d$.
 \begin{proof}
  The symmetric split (\ref{symmetric split A}) yields the realization
  \begin{equation}
   f(s, t) = (s_0 \land \dotsc \land s_{n - 1}) \land g(t)\,. \label{realization large w}
  \end{equation}
  Since $w < 2^{d}$,
  Remark \ref{sym tree bound} allows the construction of a symmetric tree on inputs $s$ with delay $d$.
  In order to show that $f((), t) = g(t)$ can be realized with delay $d$,
  by induction hypothesis, it suffices to show the second inequality in
  \[W(t) \stackrel{(\ref{main req d+1})}{\leq}
         \Ceta \frac{2^{d} - w}{(d+1)\log_2(d+1)}  +  \frac{d}{d+1} \Lambda_t
       \leq \Ceta \frac{2^{d-1}}{d \log_2(d)}  +  \frac{d-1}{d} \Lambda_t\,.\]
  Subtracting the left-hand side from the right-hand side, we prove this via
  { 
  \newcommand{\ca}[1]{&\customalign{======}{#1}}
  \begin{align*}
  \ca{} \Ceta \frac{2^{d-1}}{d \log_2(d)} + \frac{d-1}{d} \Lambda_t - \Ceta \frac{2^{d} - w}{(d+1)\log_2(d+1)} - \frac{d}{d+1} \Lambda_t \\
  \ca{\stackrel{\text{Lem.\ }\ref{sub-calc}}{\geq}} \Ceta \frac{2^{d-1}\log_2(d+1) - (2^d - w) \log_2(d)}{d \log_2(d) \log_2(d+1)} \\
  \ca{\stackrel{w \geq 2^{d-1}}{\geq}} \Ceta \frac{2^{d-1}(\log_2(d+1) - \log_2(d))}{d \log_2(d) \log_2(d+1)} \\
  \ca{\geq} 0\,.
  \end{align*}
  } 
  Hence, applying the symmetric split (\ref{realization large w}) yields a realization for $f(s, t)$
  with delay $d+1$.
 \end{proof}
\end{lemma}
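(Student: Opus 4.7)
The plan is to use the symmetric split (\ref{symmetric split A}), which realizes $f(s,t) = (s_0 \land \dotsc \land s_{n-1}) \land g(t)$ and so decouples the problem into a pure AND-tree on $s$ and an extended \aop{} on $t$ alone, joined by a single AND gate. The symmetric inputs then only interact via Kraft's inequality (Remark~\ref{sym tree bound}), and the alternating part $g(t) = f((),t)$ should be handled directly by the inductive hypothesis at depth $d$.

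Concretely, first I would build a symmetric AND-tree on $s$ of delay $d$: this is possible via Huffman coding (Remark~\ref{sym tree bound}) because $w = W(s) < 2^d$. Second, I would realize $g(t) = f((),t)$ with delay $d$ by invoking the inductive hypothesis (Theorem~\ref{main theorem} at depth $d$ with $w' = W(()) = 0$). Joining the two delay-$d$ subcircuits with one final AND gate then yields a circuit of delay $d+1$ for $f(s,t)$. The whole argument thus reduces to verifying the bound
\[
   W(t) \le \Ceta \frac{2^{d-1}}{d\log_2 d} + \frac{d-1}{d}\Lambda_t
\]
required to apply the inductive hypothesis. The hypothesis (\ref{main req d+1}) only gives a weaker bound, so the task is to show that the weaker bound implies this stronger one. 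Lemma~\ref{sub-calc} measures exactly this gap: it gives a lower bound of
\[
   \Ceta \frac{2^{d-1}\log_2(d+1) - (2^d - w)\log_2 d}{d \log_2 d \log_2(d+1)},
\]
which is nonnegative because $w \ge 2^{d-1}$ implies $2^d - w \le 2^{d-1}$ while $\log_2(d+1) \ge \log_2 d$. This closes the proof.

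The main obstacle is really nothing beyond Lemma~\ref{sub-calc}: once that lemma is available, the symmetric split is the natural reduction and the decisive inequality is exactly what it packages, the condition $w \ge 2^{d-1}$ being just what is needed to make the numerator nonnegative. The genuinely difficult part of Theorem~\ref{main theorem d+1} lies in the complementary regime $w < 2^{d-1}$, where the symmetric split no longer leaves enough budget on the alternating side and one must resort to the alternating split instead, which (as Remark~\ref{Grinchuk vs us} explains) cannot directly exploit maximality of a prefix $t'$ and instead demands the strengthened induction hypothesis built into Theorem~\ref{main theorem} via the additive $\Lambda_t$ term.
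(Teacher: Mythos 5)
Your proposal is correct and follows the paper's proof step for step: symmetric split~(\ref{symmetric split A}), a Huffman tree on $s$ via Remark~\ref{sym tree bound} using $w < 2^d$, the induction hypothesis applied to $f((),t)$ at delay $d$, and Lemma~\ref{sub-calc} together with $w \geq 2^{d-1}$ to verify the required inequality. There is no meaningful divergence from the paper's argument.
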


In the case $0 \leq w < 2^{d-1}$, we need a bound on the logarithm of consecutive integers.

\begin{rem} \label{log remark}
For $d \geq 3$, we have $d \geq \ln(2)(d+1)$ and thus
\begin{equation}
   \log_2(d+1) - \log_2(d) = \frac{\ln(d+1) - \ln(d)}{\ln(2)}
                            = \int_d^{d+1} \frac{1}{\ln(2) x} dx
                           \geq \frac{1}{\ln(2)(d+1)}
                           \geq \frac{1}{d}\,. \label{log(d+1) greater}
\end{equation}
\end{rem}

Now we will prove Theorem \ref{main theorem d+1} for the case that $0 \leq w < 2^{d-1}$.
\begin{lemma} \label{small w}
 Theorem \ref{main theorem d+1} holds for each $w$ satisfying $0 \leq w < 2^{d-1}$.
  \begin{proof}
  We prove this lemma via a case distinction.
  In Case 2, we will consider a prefix $t'$ of the inputs $t$ with weight at most
  $\Ceta \frac{2^{d-1} - w}{d \log_2(d)}$ in order to proceed similarly as indicated in Remark~\ref{Grinchuk vs us}.
  If the weight of $t_0$ is already larger than this, such a prefix does not exist.
  We deal with this situation in Case 1.

  \textbf{Case 1:} Assume that
  \begin{equation}
   W(t_0) > \Ceta \frac{2^{d-1} - w}{d \log_2(d)}\,. \label{t_0 case}
  \end{equation}
  The symmetric split (\ref{symmetric split B}) yields
  \begin{equation}
   f(s, t) = (s_0 \land \dotsc \land s_{n-1} \land t_0) \land g^*((t_1, t_2, \dotsc, t_{m-1}))\,. \label{case 1 realization}
  \end{equation}
  Due to inequality~(\ref{bound W(t) + w}) and $d+1 \geq  4 > 2^{\Ceta} $, we have $W(t_0) + w \leq W(t) + w \leq 2^d$.
  Hence, by Remark~\ref{sym tree bound}, we can realize $s_0 \land \dotsc \land s_{n-1} \land t_0$ as a binary tree with delay $d$.
  Thus, we will check inductively that $f^*((), (t_1, t_2, \dotsc, t_{m-1})) = g^*((t_1, t_2, \dotsc, t_{m-1}))$
  can be realized with delay $d$.
  Note that requirement (\ref{main req d+1}) and condition (\ref{t_0 case}) imply
  \begin{align*}
   W((t_1, t_2, \dotsc, t_{m-1}))
     & < \Ceta \frac{2^{d} - w}{(d+1)\log_2(d+1)} + \frac{d}{d+1} \Lambda_t - \Ceta \frac{2^{d-1} - w}{d \log_2(d)}\,,
  \end{align*}
  which we claim to be at most $\Ceta \frac{2^{d-1}}{d \log_2(d)} + \frac{d-1}{d} \Lambda_t$.
  This can be shown by
  { 
  \newcommand{\ca}[1]{&\customalign{======}{#1}}
  \begin{align*}
    \ca{} \Ceta \frac{2^{d-1}}{d \log_2(d)}  +  \frac{d-1}{d} \Lambda_t  -  \Ceta \frac{2^{d} - w}{(d+1)\log_2(d+1)}
        -  \frac{d}{d+1} \Lambda_t + \Ceta \frac{2^{d-1} - w}{d \log_2(d)} \\
    \ca{\stackrel{\text{Lem. } \ref{sub-calc}}{\geq}} \Ceta \frac{2^{d-1}\log_2(d+1) - (2^d - w) \log_2(d)}{d \log_2(d) \log_2(d+1)} + \Ceta \frac{2^{d-1} - w}{d \log_2(d)}\\
    \ca{=} \Ceta \frac{(2^{d} - w)(\log_2(d+1) - \log_2(d))}{d \log_2(d) \log_2(d+1)} \\
    \ca{\stackrel{w < 2^d}{\geq}}  0\,.
  \end{align*}
  } 
  Thus, realization~(\ref{case 1 realization}) yields a delay of $d+1$ for $f(s, t)$,
  which proves the lemma for the case that $W(t_0) > \Ceta \frac{2^{d-1} - w}{d \log_2(d)}$.

  \textbf{Case 2:} Assume that $W(t_0) \leq \Ceta \frac{2^{d-1} - w}{d \log_2(d)}$.

  Therefore, we can consider a maximum odd-length prefix $t' = (t_0, t_1, \dotsc, t_{2k})$ of $t$
  with $0 \leq k \leq \frac{m-1}{2}$ such that
  \begin{equation}\label{prefix}
   W(t') \leq \Ceta \frac{2^{d-1} - w}{d \log_2(d)}\,.
  \end{equation}
  We define $t'' := (t_{2k+1}, \dotsc, t_{m-1})$.

  If $t''$ is empty, there is nothing to show since, by induction hypothesis,
  we can construct $f(s, t) = f(s, t')$ with a delay of $d < d+1$ due to $w < 2^{d-1}$.
  Otherwise, we will realize $f(s, t)$ with delay $d+1$ using the alternating split (\ref{alternating split dual})
  for some prefix $t^*$ of $t$ to be determined, i.e.,
  \begin{equation}\label{realization}
   f(s, t) = f(s, t^*) \land f^*(\widehat{t^*}, t^{**}),
  \end{equation}
  where $t^* = (t_0, t_1, \dotsc, t_{2l})$
  for some $0 \leq l < \frac{m-1}{2}$ and $t^{**} := (t_{2l+1}, \dotsc, t_{m-1})$.
  Our main argument, which is presented in Case 2 (ii),
  requires that $\{t_{2k+1}, t_{2k+2}\} \cap \{t_{m-2}, t_{m-1}\} = \emptyset$,
  i.e., that $t''$ has at least $4$ elements.
  Thus, in Case 2 (i), we treat the $t''$ with at most $2$ elements,
  and at the beginning of Case 2 (ii) those with exactly $3$ elements.

  \textbf{Case 2 (i):} Assume that $t''$ consists of at most $2$ elements.

  We set $t^* := t'$, thus $t^{**} = t''$.
  By induction hypothesis and due to $w < 2^{d-1}$,
  inequality~(\ref{prefix}) allows realizing $f(s, t')$ with delay $d$.
  Since $t''$ has at most $2$ elements,
  by Remark \ref{sym tree bound}, we can realize $f^*(\widehat{t'}, t'')$
  as a binary tree with delay $d$ since
  $W(\widehat{t'}) + W(t'') \leq W(t)$, which is at most $2^{d}$
  due to inequality~(\ref{bound W(t) + w}) and $d+1 \geq 4 > 2^{\Ceta}$.

  \textbf{Case 2 (ii):} Assume that $t''$ contains at least $3$ elements.

  Set $\tilde t := (t_0, \dotsc, t_{2k+2})$.
  We need to find an appropriate prefix $t^*$ of $t$ for realization~(\ref{realization})
  such that both $f(s, t^*)$ and $f^*(\widehat{t^*}, t^{**})$
  can be realized with delay $d$ by induction hypothesis.
  We choose $t^*$ depending on the weight of $\tilde t$:
  \begin{enumerate}
  \begin{samepage}
  \renewcommand{\theenumi}{\alph{enumi}}
   \item If $W\left(\tilde t\right) \leq \Ceta \frac{2^{d-1} - w}{d\log_2(d)} + \frac{d-1}{d} \Lambda_{\tilde t}$,
         we set $t^* := \tilde t$.
   \item If $W\left(\tilde t\right) > \Ceta \frac{2^{d-1} - w}{d\log_2(d)} + \frac{d-1}{d} \Lambda_{\tilde t}$,
         we set $t^* := t'$.
         Note that in this case, we in particular have
         \[W(t^*) = W(t') = W\left(\tilde t\right) - \Lambda_{\tilde t} > \Ceta \frac{2^{d-1} - w}{d \log_2(d)} + \frac{d-1}{d} \Lambda_{\tilde t} - \Lambda_{\tilde t}
                = \Ceta \frac{2^{d-1} - w}{d \log_2(d)} - \frac{1}{d}\Lambda_{\tilde t}\,.\]
  \end{samepage}
  \end{enumerate}
  Figure \ref{Lambda figure} visualizes the case distinction.
  In either case, the weight of $t^*$ will be of the form
  \begin{equation}
   \hfill W(t^*) = \Ceta \frac{2^{d-1} - w}{d \log_2(d)} + \delta \hfill \text{ with } \hfill-\frac{1}{d} \Lambda_{\tilde t} \leq \delta \leq \frac{d-1}{d}\Lambda_{\tilde t}\,. \hfill \label{delta formula}
  \end{equation}
  The upper bound on $\delta$ allows us to realize $f(s, t^*)$ with delay $d$ by induction hypothesis since $w < 2^{d-1}$.
  It remains to show that $f^*(\widehat{t^*}, t^{**})$ can be realized with delay $d$.

  \begin{figure}
   \centering
   \providecommand{\subfigw}{0.7\textwidth}
   \begin{minipage}{\subfigw}
   \resizebox{\textwidth}{!}{%
   \begin{tikzpicture}
   \tikzstyle{every node}=[outer sep=0pt]

   \node (poss2) at (-2.5, 3.2){\phantom{$t_0$}};
   \node (post0) at (-1.5, 3.2){\phantom{$t_0$}};
   \node (post4) at (2.5, 3.2){\phantom{$t_0$}};
   \node (post5) at (3.5, 3.2){\phantom{$t_0$}};
   \node (post6) at (4.5, 3.2){\phantom{$t_0$}};
   \node (post7) at (5.5, 3.2){\phantom{$t_0$}};

   \draw [fill=green, opacity=0.5] ($(poss2.north)!0.5!(post0.north)$) rectangle ($(post4.south)!0.5!(post5.south)$);
   \draw [fill=yellow, opacity=0.5] ($(post4.south)!0.5!(post5.south)$) rectangle ($(post6.north)!0.5!(post7.north)$);

   \node (s0) at (-4.5, 3.2){$s_0$};
   \node (s1) at (-3.5, 3.2){$s_1$};
   \node (s2) at (poss2){$s_2$};
   \node (t0) at (-1.5, 3.2){$t_0$};
   \node (t1) at (-0.5, 3.2){$t_1$};
   \node (t2) at (0.5, 3.2){$t_2$};
   \node (t3) at (1.5, 3.2){$t_3$};
   \node (t4) at (post4){$t_4$};
   \node (t5) at (post5){$t_5$};
   \node (t6) at (post6){$t_6$};
   \node (t7) at (post7){$t_7$};
   \node (t8) at (6.5, 3.2){$t_8$};
   \node (t9) at (7.5, 3.2){$t_9$};

   \draw[decorate,decoration={brace, amplitude=5pt, mirror, raise=10pt}] (t5.west) -- (t6.east) node [midway, below, sloped, yshift=-15pt] {$\Lambda_{\tilde t}$};
   \draw[decorate,decoration={brace, amplitude=5pt, mirror, raise=10pt}] (t8.west) -- (t9.east) node [midway, below, sloped, yshift=-15pt] {$\Lambda_{t}$};
   \draw[decorate,decoration={brace, amplitude=10pt, raise=10pt}] (t0.west) -- (t4.east) node [midway, above, sloped, yshift=20pt] {$t'$};
   \draw[decorate,decoration={brace, amplitude=10pt, raise=30pt}] (t0.west) -- (t6.east) node [midway, above, sloped, yshift=40pt] {$t^* := \tilde t$};
   \draw[decorate,decoration={brace, amplitude=10pt, raise=30pt}] (t7.west) -- (t9.east) node [midway, above, sloped, yshift=40pt] {$t^{**}$};

   \end{tikzpicture}}
   \caption*{(a) \, In the case that $W\left(\tilde t\right) \leq \frac{2^{d-1} - w}{d \log_2(d)}$, we set $t^* := \tilde t$.}
   \end{minipage}
   \begin{minipage}{\subfigw}
   \resizebox{\textwidth}{!}{%
   \begin{tikzpicture}

   \node (poss2) at (-2.5, 3.2){\phantom{$t_0$}};
   \node (post0) at (-1.5, 3.2){\phantom{$t_0$}};
   \node (post4) at (2.5, 3.2){\phantom{$t_0$}};
   \node (post5) at (3.5, 3.2){\phantom{$t_0$}};
   \node (post6) at (4.5, 3.2){\phantom{$t_0$}};
   \node (post7) at (5.5, 3.2){\phantom{$t_0$}};

   \draw [fill=green, opacity=0.5] ($(poss2.north)!0.5!(post0.north)$) rectangle ($(post4.south)!0.5!(post5.south)$);
   \draw [fill=red, opacity=0.5] ($(post4.south)!0.5!(post5.south)$) rectangle ($(post6.north)!0.5!(post7.north)$);

   \node (s0) at (-4.5, 3.2){$s_0$};
   \node (s1) at (-3.5, 3.2){$s_1$};
   \node (s2) at (poss2){$s_2$};
   \node (t0) at (-1.5, 3.2){$t_0$};
   \node (t1) at (-0.5, 3.2){$t_1$};
   \node (t2) at (0.5, 3.2){$t_2$};
   \node (t3) at (1.5, 3.2){$t_3$};
   \node (t4) at (post4){$t_4$};
   \node (t5) at (post5){$t_5$};
   \node (t6) at (post6){$t_6$};
   \node (t7) at (post7){$t_7$};
   \node (t8) at (6.5, 3.2){$t_8$};
   \node (t9) at (7.5, 3.2){$t_9$};

   \draw[decorate,decoration={brace, amplitude=5pt, mirror, raise=10pt}] (t5.west) -- (t6.east) node [midway, below, sloped, yshift=-15pt] {$\Lambda_{\tilde t}$};
   \draw[decorate,decoration={brace, amplitude=5pt, mirror, raise=10pt}] (t8.west) -- (t9.east) node [midway, below, sloped, yshift=-15pt] {$\Lambda_{t}$};
   \draw[decorate,decoration={brace, amplitude=10pt, raise=10pt}] (t0.west) -- (t4.east) node [midway, above, sloped, yshift=20pt] {$t^* := t'$};
   \draw[decorate,decoration={brace, amplitude=10pt, raise=10pt}] (t5.west) -- (t9.east) node [midway, above, sloped, yshift=20pt] {$t^{**}$};
   \draw[decorate,decoration={brace, amplitude=10pt, raise=30pt}] (t0.west) -- (t6.east) node [midway, above, sloped, yshift=40pt] {$\tilde t$};

   \end{tikzpicture}}
   \caption*{(b) \, In the case that $W\left(\tilde t\right) > \frac{2^{d-1} - w}{d \log_2(d)}$, we set $t^* := t'$.}
   \end{minipage}
   \caption{Illustration of the choice of $t^*$.}
   \label{Lambda figure}
  \end{figure}

  The case that $t''$ contains exactly $3$ elements still needs to be treated separately.
  Here, case (a) is easy since we have $t^{**} = (t_{m-1})$,
  hence $f^*(\widehat{t^*}, t^{**})$ is a binary tree
  which can be realized with delay $d$ by Huffman coding since $W(t) \leq 2^d$ due to inequality (\ref{bound W(t) + w}).
  In Case (b), we show that the realization
  \[f^*(\widehat{t'}, (t_{2k+1}, t_{2k+2}, t_{2k+3})) = (\widehat{t'} \lor t_{2k+1}) \lor (t_{2k+2} \land t_{2k+1})\]
  yields delay $d$:
  The binary tree $\widehat{t'} \lor t_{2k+1}$ can be realized with delay $d-1$ using Remark~\ref{sym tree bound} since
  \[W(\widehat{t'}) + W(t_{2k+1}) \overset{(\ref{main req - lambda})}{\leq} \Ceta \frac{2^d - w}{(d+1) \log_2(d+1)}
                                  \overset{\Ceta < 2, w \geq 0}{<} \frac{2^{d+1}}{(d+1) \log_2(d+1)}
                                  \overset{d \geq 2}{\leq} 2^{d-1}\,.\]
   It remains to show $\Lambda_t \leq 2^{d-1}$,
   so assume the contrary.
   W.l.o.g., since $f(s, t)$ is logically symmetric in $t_{2k+2}$ and $t_{2k+3}$,
   we may assume $W(t_{2k+3}) = \max\{W(t_{2k+3}), W(t_{2k+2})\}$.
   Due to inequality~(\ref{bound W(t) + w}) and $m \geq 4$, we have $2^{d-1} < \Lambda_t < 2^d$.
   This implies $W(t_{2k+3}) =2^{d-1}$ and $W(t_{2k+2}) \leq 2^{d-2}$.
   By combining
   { 
  \newcommand{\ca}[1]{&\customalign{=====}{#1}}
   \begin{align*}
    W(t) - \Lambda_t \overset{(\ref{main req d+1})}{\leq} \Ceta \frac{2^{d} - w}{(d+1) \log_2(d+1)} - \frac{\Lambda_t}{d+1}
      \ca{=} \Ceta \frac{2^{d} - w}{(d+1) \log_2(d+1)} - \frac{W(t_{2k+2}) + W(t_{2k+3})}{d+1}
   \shortintertext{and}
             W(t) - \Lambda_t = W(\tilde t) - W(t_{2k+2})
      \ca{\overset{\text{case (b)}}{>}} \Ceta \frac{2^{d-1} - w}{d \log_2 d} + \frac{d-1}{d} \Lambda_{\tilde t} - W(t_{2k+2}) \\
      \ca{\overset{(\ref{W >= 1})}{\geq}} \Ceta \frac{2^{d-1} - w}{d \log_2 d} + \frac{d-1}{d} - \frac{W(t_{2k+2})}{d} \,,
   \end{align*}
   }
   we obtain
   { 
  \newcommand{\ca}[1]{&\customalign{=========}{#1}}
   \begin{align*}
    0 \ca{<} \Ceta \frac{2^{d} - w}{(d+1) \log_2(d+1)} - \Ceta \frac{2^{d-1} - w}{d \log_2 d} - \frac{W(t_{2k+2}) + W(t_{2k+3})}{d+1} - \frac{d-1}{d} + \frac{W(t_{2k+2})}{d} \\
      \ca{\overset{w < 2^{d-1}}{<}} \Ceta \frac{2^{d-1}}{(d+1) \log_2(d+1)} + \frac{W(t_{2k+2}) - d W(t_{2k+3}) - d^2 + 1}{d(d+1)} \\
      \ca{\overset{\substack{W(t_{2k+2}) \leq 2^{d-2}, \\ W(t_{2k+3}) = 2^{d-1}}}{\leq}} \Ceta \frac{2^{d-1}}{(d+1) \log_2 (d+1)} + \frac{2^{d-2} - d 2^{d-1} - d^2 + 1}{d(d+1)}  \\
      \ca{\overset{\Ceta < 2}{<}} \frac{d 2^{d} + \log_2 (d+1) \left(2^{d-2} - d 2^{d-1} - d^2 + 1\right) }{d (d+1) \log_2(d+1)} \\
      \ca{<} 0\,,
   \end{align*}
   }
   where the last step can be verfied by hand for $d = 3$,
   and for $d \geq 4$ is implied by
   \[d 2^d + \log_2(d+1) \left(2^{d-2} - d 2^{d-1}\right) = 2^{d-2} \left(4d + \log_2 (d+1) \left( 1 - 2d \right) \right)
      \overset{d \geq 4}{\leq} 2^{d-2} \left(4 d - 7 \log_2 (5) \right) \overset{d \geq 4}{<} 0\,.\]
   This is a contradiction, concluding the case that $t''$ has exactly $3$ elements.

  Now we may assume that $t''$ contains at least $4$ elements.
  In particular, the first two elements $t_{2k+1}$ and $t_{2k+2}$ of $t''$ and the last two elements $t_{m-2}$ and $t_{m-1}$ of $t$ are disjoint sets.
  Note that since $t^*$ does not contain any of the last two elements of $t$, we have
  \[W(\widehat{t^*}) \stackrel{(\ref{main req - lambda})}{\leq} \Ceta \frac{2^{d} - w}{(d+1) \log_2(d+1)}
       \stackrel{w \geq 0}{\leq} \Ceta \frac{2^d}{(d+1) \log_2(d+1)}
      \stackrel{\Ceta < 2}{<} 2^{d-1}\]
  for $d \geq 2$
  and thus, by induction hypothesis, it suffices to prove that
  \begin{equation}\label{to show}
   W(t^{**}) \leq \Ceta \frac{2^{d-1} - W(\widehat{t^*})}{d \log_2(d)} + \frac{d-1}{d} \Lambda_{t^{**}}\,.
  \end{equation}
  Due to requirement (\ref{main req d+1}), we have
  \[W(t^{**}) = W(t) - W(t^*) \leq \Ceta \frac{2^{d}-w}{(d+1) \log_2(d+1)} + \frac{d}{d+1} \Lambda_t - W(t^*)\,.\]
  Since $W(\widehat{t^*}) \leq W(t^*)$ and $\Lambda_{t^{**}} = \Lambda_t$,
  inequality~(\ref{to show}) is thus implied if we prove the following claim.

  \begin{claim}
   We have $\Ceta \frac{2^{d-1} - W(t^*)}{d \log_2(d)} + \frac{d-1}{d} \Lambda_{t}
       - \Ceta \frac{2^{d}-w}{(d+1) \log_2(d+1)} - \frac{d}{d+1} \Lambda_t + W(t^*) \geq 0\,.$
  \end{claim}
  \begin{proof_of_claim}
  We first only bound the summands depending on $W(t^*)$ or $\Lambda_t$.
 { 
  \allowdisplaybreaks
  \newcommand{\ca}[1]{&\customalign{=====}{#1}}
  \begin{align}
  \ca{} - \Ceta \frac{W(t^*)}{d \log_2(d)} + \frac{d-1}{d} \Lambda_t - \frac{d}{d+1} \Lambda_t + W(t^*) \nonumber\\
  \ca{=} W(t^*) \left(1 - \frac{\Ceta}{d \log_2(d)}\right) - \frac{1}{d (d+1)} \Lambda_t \nonumber\\
  \ca{\stackrel{(\ref{delta formula})}{\geq}} \left(\Ceta \frac{2^{d-1} - w}{d \log_2(d)}
            - \frac{1}{d}\Lambda_{\tilde t}\right) \frac{d \log_2(d) - \Ceta}{d \log_2(d)} - \frac{1}{d (d+1)} \Lambda_t \nonumber\\
  \ca{=} \Ceta \frac{(2^{d-1} - w)(d \log_2(d) - \Ceta )}{d^2 \log^2_2(d)} - \frac{d \log_2(d) - \Ceta}{d^2 \log_2(d)} \Lambda_{\tilde t}
            - \frac{1}{d (d+1)} \Lambda_t \nonumber\\
  \ca{=} \Ceta \frac{(2^{d-1} - w)(d \log_2(d) - \Ceta)}{d^2 \log^2_2(d)} - \frac{1}{d} \left( \Lambda_{\tilde t} + \frac{\Lambda_t}{d+1} \right) + \frac{\Ceta \Lambda_{\tilde t}}{d^2 \log_2(d)} \nonumber\\
  \ca{\stackrel{(\ref{main req - lambda})}{\geq}} \Ceta \frac{(2^{d-1} - w)(d \log_2(d) - \Ceta)}{d^2 \log^2_2(d)}
      - \frac{1}{d} \left( \Ceta \frac{2^d - w}{(d+1)\log_2(d+1)} - W(t') \right) + \frac{\Ceta \Lambda_{\tilde t}}{d^2 \log_2(d)} \nonumber \\
  \ca{\stackrel{\text{Rem. }\ref{W >= 1}}{\geq}} \Ceta \frac{(2^{d-1} - w)(d \log_2(d) - \Ceta)}{d^2 \log^2_2(d)}
      - \Ceta \frac{2^d - w}{d(d+1)\log_2(d+1)} + \frac{1}{d} + \frac{2 \Ceta}{d^2 \log_2(d)} \label{Lambda delta calc}
  \end{align}
  } 
  Note that in the last two steps, we used that $t_{2k+1}, t_{2k+2}, t_{m-2}, t_{m-1}$ are four different inputs
  which are not contained in $t'$ and that $t'$ is not empty.
  Based on inequality~(\ref{Lambda delta calc}), the left-hand side of the claim can be bounded from below by
  { 
  \newcommand{\ca}[1]{&\customalign{===}{#1}}
  \begin{align*}
   \ca{} \Ceta \frac{2^{d-1} - W(t^*)}{d \log_2(d)} + \frac{d-1}{d} \Lambda_{t}
          - \Ceta \frac{2^{d}-w}{(d+1) \log_2(d+1)} - \frac{d}{d+1} \Lambda_t + W(t^*) \\
   \ca{\stackrel{(\ref{Lambda delta calc})}{\geq}} \Ceta \Bigg( \frac{(2^{d-1} - w)(d \log_2(d) - \Ceta)}{d^2 \log^2_2(d)}
          - \frac{2^d - w}{d(d+1)\log_2(d+1)}
          + \frac{1}{d} + \frac{2 \Ceta}{d^2 \log_2(d)}
          + \frac{2^{d-1}}{d \log_2(d)} \\
   \ca{} \phantom{\Ceta \Bigg(}
          - \frac{2^{d}-w}{(d+1) \log_2(d+1)} \Bigg) \\
   \ca{=} \Ceta \left( \frac{2^{d-1} - w + 2^{d-1}}{d \log_2(d)}
                      - \Ceta \frac{2^{d-1} - w}{d^2 \log_2^2(d)}
                      - \frac{(2^{d}-w) (d+1)}{d (d+1)\log_2(d+1)}
                      + \frac{1}{d} + \frac{2 \Ceta}{d^2 \log_2(d)}\right) \\
   \ca{=} \Ceta \left( \frac{2^{d} - w}{d \log_2(d)}
                       - \Ceta \frac{2^{d-1} - w}{d^2 \log_2^2(d)}
                       - \frac{2^{d}-w}{d \log_2(d+1)}
                       + \frac{1}{d} + \frac{2 \Ceta}{d^2 \log_2(d)}\right) \\
   \ca{=}  \frac{\Ceta}{d^2 \log_2^2(d) \log_2(d+1)}
    \Bigg(\log_2(d+1) \Big((2^d - w) d \log_2(d)
                - \Ceta(2^{d-1} - w)\Big) \\
   \ca{} \phantom{\frac{\Ceta}{d^2 \log_2^2(d) \log_2(d+1)} +}
                - (2^d-w) d \log_2^2(d)
                + \left(2 + \frac{1}{\Ceta} d \log_2(d)\right) \log_2(d) \log_2(d+1)\Bigg) \,,
  \end{align*}
  } 
  which is required to be non-negative.
  After multiplying with the denominator and dividing by $\Ceta$,
  we apply the bound $\log_2(d+1) \geq \log_2(d) + \frac{1}{d}$ stated in Remark \ref{log remark},
  and thus can prove the claim if we show that
  { 
  \newcommand{\ca}[1]{&\customalign{======}{#1}}
  \begin{align*}
   \ca{} \log_2(d+1) \Big((2^d - w) d \log_2(d) - \Ceta(2^{d-1} - w)\Big) \\
   \ca{}           - (2^d-w) d \log_2^2(d)
                   + \left(2 + \frac{1}{\Ceta} d \log_2(d)\right) \log_2(d) \log_2(d+1)\\
   \ca{\stackrel{\Ceta < 2, (\ref{log(d+1) greater})}{\geq}} (2^d - w) \log_2(d)
                                                             - \left(\log_2(d) + \frac{1}{d}\right) \Ceta(2^{d-1} - w)
                                                             + \left(2 + \frac{1}{\Ceta} d \log_2(d)\right) \log_2(d) \log_2(d+1)\\
   \ca{\stackrel{\Ceta \geq 1, w \geq 0}{\geq}} 2^d \log_2(d)
                                                - \left(\log_2(d) + \frac{1}{d}\right) \Ceta 2^{d-1}
                                                + \left(2 + \frac{1}{\Ceta} d \log_2(d)\right) \log_2(d) \log_2(d+1)
   \end{align*}
   } 
  is at least $0$.
  Note that for $d \geq 7$, this is already implied by
  \[2^d \log_2(d) - \left(\log_2(d) + \frac{1}{d}\right) \Ceta 2^{d-1}
   \stackrel{\Ceta = 1.9}{=} 2^{d-1} \left (0.1 \log_2 (d) - \frac{1.9}{d} \right)
   \stackrel{d \geq 7}{\geq} 0\,.\]
   For $3 \leq d \leq 6$, we have
   { 
  \newcommand{\ca}[1]{&\customalign{====}{#1}}
   \begin{align*}
    \ca{} 2^d \log_2(d) - \left(\log_2(d) + \frac{1}{d}\right) \Ceta 2^{d-1}
                        + \left(2 + \frac{1}{\Ceta} d \log_2(d)\right) \log_2(d) \log_2(d+1)\\
    \ca{\stackrel{\Ceta = 1.9}{=}} \log_2(d) \left(0.1 \cdot 2^{d-1} + \left(2 + \frac{1}{1.9} d \log_2(d)\right) \log_2(d+1)\right) - \frac{1.9}{d} 2^{d-1} \\
    \ca{\stackrel{d \geq 3}{\geq}} \log_2(3) \left(0.1 \cdot 2^{2} + \left(2 + \frac{1}{1.9} 3 \log_2(3)\right) 2\right)
                                   - \frac{1.9 }{d} 2^{d-1} \\
    \ca{>} 14 - \frac{1.9 }{d} 2^{d-1} \\
    \ca{\stackrel{d \leq 6}{\geq}} 14 - \frac{1.9 }{6} 32\\
    \ca{>} 0\,.
   \end{align*}
  } 
  This proves the claim.
   Note that this is the only place where we used the definition $\Ceta = 1.9$.
  \end{proof_of_claim}

  Thus, by induction hypothesis, we can find a realization with delay $d$ for $f(\widehat{t^*}, t^{**})$.
  Split (\ref{realization}) hence also provides a realization with delay $d+1$ for $f(s, t)$ if $t''$ contains at least $4$ elements.
  This concludes the proof.
  \end{proof}
\end{lemma}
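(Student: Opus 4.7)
The plan is to apply the alternating split (\ref{alternating split dual}) to $f(s,t)$ and invoke the induction hypothesis on both subinstances. Because the prefix weight cannot always be chosen close to the target bound (see Remark~\ref{Grinchuk vs us}), we first dispose of a degenerate case: if $W(t_0)$ alone already exceeds $\Ceta \frac{2^{d-1}-w}{d\log_2(d)}$, then no nontrivial odd-length prefix $t'$ fits under this threshold and we instead apply the symmetric split (\ref{symmetric split B}), realizing $s_0 \land \dotsc \land s_{n-1} \land t_0$ by a Huffman tree of delay $d$ (the total weight is at most $2^d$ by Lemma~\ref{lambda lemma} since $d+1 \geq 4 > 2^{\Ceta}$) and recursing on $g^*((t_1,\dotsc,t_{m-1}))$, whose weight bound of the form required by the induction hypothesis follows from subtracting $W(t_0)$ from requirement~(\ref{main req d+1}) and applying Lemma~\ref{sub-calc}.

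Otherwise, I pick $t' := (t_0,\dotsc,t_{2k})$ to be the maximal odd-length prefix of $t$ with $W(t') \leq \Ceta \frac{2^{d-1}-w}{d\log_2(d)}$, and set $t'' := (t_{2k+1},\dotsc,t_{m-1})$. If $t''$ is empty, induction on $t'$ already gives delay $d < d+1$. A few-element case ($|t''| \leq 3$) needs separate treatment, because the upcoming argument will rely on the last two entries of $t$ lying strictly to the right of the split; in that situation I set $t^* := t'$ and bound $W(\widehat{t'})+W(t_{2k+1})$ and $\Lambda_t$ individually using requirements~(\ref{main req - lambda}) and~(\ref{main req * d}), which is straightforward because the small size of $t''$ produces slack.

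The main case is $|t''| \geq 4$. Here the key trick is to look at $\tilde t := (t_0,\dotsc,t_{2k+2})$. By maximality of $t'$, either (a) $W(\tilde t) \leq \Ceta \frac{2^{d-1}-w}{d\log_2(d)} + \frac{d-1}{d}\Lambda_{\tilde t}$, in which case I take $t^* := \tilde t$, or (b) this fails, in which case $t^* := t'$ and one gets the \emph{lower} bound $W(t^*) > \Ceta \frac{2^{d-1}-w}{d\log_2(d)} - \frac{1}{d}\Lambda_{\tilde t}$ for free (since $W(t') = W(\tilde t) - \Lambda_{\tilde t}$). In both cases $W(t^*)$ fits within a window $\Ceta \frac{2^{d-1}-w}{d\log_2(d)} + \delta$ with $-\tfrac{1}{d}\Lambda_{\tilde t} \leq \delta \leq \tfrac{d-1}{d}\Lambda_{\tilde t}$, so the upper end of this window lets us invoke the induction hypothesis on $f(s,t^*)$ to realize it with delay $d$ (using $w < 2^{d-1}$).

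The hard part is verifying that $f^*(\widehat{t^*}, t^{**})$ also admits a delay-$d$ realization via induction. One must show
\[
 W(t^{**}) \leq \Ceta \frac{2^{d-1} - W(\widehat{t^*})}{d\log_2(d)} + \frac{d-1}{d}\Lambda_{t^{**}}\,,
\]
and since $W(\widehat{t^*}) \leq W(t^*)$ and $\Lambda_{t^{**}} = \Lambda_t$, it suffices to bound $W(t^{**}) = W(t)-W(t^*)$ from above using requirement~(\ref{main req d+1}) and the lower bound on $W(t^*)$ from the window above. This reduces to an inequality in $w$, $d$, $\Lambda_t$ and $\Lambda_{\tilde t}$; I expect to eliminate the $\Lambda$ terms via the equivalent form~(\ref{main req - lambda}) (which gives an upper bound on $\Lambda_t$), apply Lemma~\ref{sub-calc}, and then use Remark~\ref{log remark} to convert $\log_2(d+1)$ differences into $1/d$ terms. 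The resulting inequality is expected to be tight, so the specific choice $\Ceta = 1.9$ should enter, and a finite check for small $d$ (roughly $3 \leq d \leq 6$) will likely be needed to close the asymptotic gap, mirroring Grinchuk's approach in the uniform case.
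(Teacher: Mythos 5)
Your proposal follows the paper's proof exactly: the same initial case on $W(t_0)$ (resolved via the symmetric split and Lemma~\ref{sub-calc}), the same handling of $|t''| \leq 3$, and in the main case the same $\tilde t$ device giving the window $W(t^*) = \Ceta\frac{2^{d-1}-w}{d\log_2(d)} + \delta$ with $-\frac{1}{d}\Lambda_{\tilde t} \leq \delta \leq \frac{d-1}{d}\Lambda_{\tilde t}$, followed by reducing the key inequality via~(\ref{main req - lambda}), Remark~\ref{log remark}, the choice $\Ceta = 1.9$, and a finite check for $3 \leq d \leq 6$. One minor inaccuracy: Lemma~\ref{sub-calc} is not actually invoked in the Case~2(ii) claim (only in Case~1 and Lemma~\ref{large w}); there the paper proceeds directly from~(\ref{delta formula}),~(\ref{main req - lambda}) and Remark~\ref{W >= 1}, but this does not change the substance of the argument.
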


\begin{proof}[Proof of Theorem \ref{main theorem d+1}]
 Lemma \ref{small w} proves the theorem in the case that $0 \leq w < 2^{d-1}$,
 while Lemma \ref{large w} proves it for the remaining case that $2^{d-1} \leq w < 2^d$.
\end{proof}

Finally, we can prove Theorem \ref{main theorem}.

\begin{proof}[Proof of Theorem \ref{main theorem}]
 We prove the theorem by induction on $d$.
 For $d \leq 3$, Lemma \ref{base case} provides a realization of $f(s, t)$ with delay $d$.
 Now we can assume that the theorem holds for some $d \geq 3$,
 and prove the inductive step via Theorem \ref{main theorem d+1}.
\end{proof}

\section{Constructing Fast Circuits}\label{sec::bound_delay}

Based on Theorem~\ref{bound on v(d, w)}, we could now show that there is a circuit realizing the {\aop }
$t_0 \land (t_1 \lor (t_2 \land ( \dots t_{m-1}) \dots )$ with delay at most
$\log_2 W + \log_2 \log_2 W + \log_2 \log_2 \log_2 W + 5$.
Instead, we will prove a stronger result:
By modifying the instance, we can diminish the dependency on $W$.
The modification is based on the observation that we
can round up small arrival times to the same value
without losing too much for the maximum delay. Moreover, shifting
all arrival times by some number does not change the problem.
Both modifications allow us to reduce the problem to instances with
a total arrival time weight of at most $2m$.

\begin{theorem}\label{theorem::improved_delay2}
Let $m \in \N$ with $m \geq 3$, Boolean variables $t_0,\dots,t_{m-1}$
and arrival times $a : \{t_0,\dots,t_{m-1}\} \to \N$ be given,
and define $W:= \sum_{i=0}^{m-1} 2^{a(t_i)}$.
There is circuit realizing the \aop~
$t_0 \land (t_1 \lor (t_2 \land ( \dots t_{m-1}) \dots )$
with delay at most
\[
   \log_2 W  + \log_2 \log_2 m + \log_2 \log_2 \log_2 m + 4.3\,.
\]
\proof
We compute new arrival times $\tilde a: \{t_0,\dots,t_{m-1}\} \to \N$ by
setting
\[
   \tilde a(t_i) := \max\{0, a(t_i) - \lceil (\log_2 W - \log_2 m) \rceil \}
\]
for all $i \in \{0,\dots,m-1\}$. We define $\widetilde W := \sum_{i=0}^{m-1} 2^{\tilde a(t_i)}$
and partition the input indices into $I_1 := \{i \in \{0,\dots,m-1\} \mid \tilde a(t_i) = 0\}$ and
$I_2 := \{0,\dots,m-1\} \setminus I_1$.
Then, we have
\begin{eqnarray*}
   \widetilde W
            = \sum_{i \in I_1} 2^{0} +  2^{- \lceil (\log_2 W - \log_2 m) \rceil}\sum_{i \in I_2} 2^{a(t_i)}
            \leq m + \frac{2^{\log_2 m}}{2^{\log_2 W}} W
            = 2 m\,.
\end{eqnarray*}
Define $\tilde d := \lfloor \log_2 m + \log_2 \log_2 m + \log_2 \log_2 \log_2 m + 3.3 \rfloor$.

\begin{claim}
There is a circuit $C$ realizing the {\aop } $t_0 \land (t_1 \lor (t_2 \land ( \dots t_{m-1}) \dots )$
with arrival times $\tilde a$ with delay at most $\tilde d$.
\end{claim}

\begin{proof_of_claim}
Let $M:= 500$.
If $m < M$, we have
$
  1.441 \log_2 \widetilde W + 2.674
     \leq 1.441 \log_2 (2 m) + 2.674
     = 1.441 \log_2 m + 4.115  \leq  \log_2 m + \log_2 \log_2 m  + \log_2 \log_2 \log_2 m  + 3.3$.
Since the \aop~optimization method by Held and Spirkl \cite{Spirkl} computes a
circuit with delay at most $\lfloor 1.441 \log_2 \widetilde W + 2.674 \rfloor$, this
proves the claim for $m < M$.

Hence assume $m \geq M$.
For proving the claim, by Theorem~\ref{bound on v(d, w)}, it is sufficient to
show
\begin{equation}\label{eq::bound::2m}
   2 m \leq \Ceta \frac{2^{\tilde d-1}}{\tilde d \log_2 \tilde d}\,.
\end{equation}
Note that the mapping $x \mapsto \frac{2^{x-1}}{x \log_2 x}$ is strictly increasing for
$x \geq 2$. Moreover, we have
$ \tilde d \geq \log_2 m + \log_2 \log_2 m + \log_2 \log_2 \log_2 m + 2.3$.
Since
\begin{equation} \label{eq::logbound33}
  \log_2 m + \log_2 \log_2 m + \log_2 \log_2 \log_2 m + 2.3
   \leq 1.8 \log_2 m
\end{equation}
for $m \geq M$, equation~(\ref{eq::bound::2m}) is hence valid if
\[
   2 m \leq \Ceta  \frac{\frac{1}{2} \cdot m \cdot  \log_2 m \cdot \log_2 \log_2 m \cdot 2^{2.3}}
                {1.8 \cdot \log_2 m \cdot \log_2(1.8 \cdot \log_2 m)}\,.
\]
This is equivalent to
\begin{equation}\label{eq::bound8}
   1.8 \cdot \log_2 (1.8 \cdot \log_2 m) \leq \Ceta 2^{0.3} \cdot \log_2 \log_2 m\,,
\end{equation}
which is true for $m \geq M$ since $\Ceta = 1.9$.
This proves the claim.
\end{proof_of_claim}

Since we have $a(t_i)  \leq  \tilde a(t_i) + \lceil (\log_2 W  -  \log_2 m) \rceil$
for all $i \in \{0,\dots,m-1\}$, the circuit $C$ has, for the initial arrival
times $a: \{t_0,\dots, m-1 \} \to \N$, a delay of at most
{ 
  \newcommand{\ca}[1]{&\customalign{==}{#1}}
\begin{align*}
        \ca{} \lfloor \log_2 m + \log_2 \log_2 m + \log_2 \log_2 \log_2 m + 3.3\rfloor + \lceil (\log_2 W -  \log_2 m ) \rceil \\
   \ca{\leq}  \log_2 W  + \log_2 \log_2 m + \log_2 \log_2 \log_2 m + 4.3\,. \qedhere
\end{align*}
  } 
\end{theorem}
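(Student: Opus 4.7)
The plan is to exploit two elementary invariances of the problem in order to reduce to an instance whose total weight is essentially $O(m)$, and then invoke Theorem~\ref{bound on v(d, w)} on this reduced instance. The point is that a naive application of Theorem~\ref{bound on v(d, w)} would give a bound of roughly $\log_2 W + \log_2 \log_2 W + \log_2 \log_2 \log_2 W + O(1)$; I want to downgrade the $\log_2 W$ inside the iterated logarithms to $\log_2 m$. This is possible because (a)~adding a constant to every arrival time just shifts the optimal delay by that constant, and (b)~rounding small arrival times up to a common value inflates the delay only by the amount of rounding.

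Concretely, I would define shifted arrival times $\tilde a(t_i) := \max\{0,\ a(t_i) - \lceil \log_2 W - \log_2 m \rceil\}$, so that every $\tilde a(t_i)$ is either $0$ or equal to $a(t_i)$ shifted by $\lceil \log_2 W - \log_2 m \rceil$. Splitting the sum $\widetilde W = \sum_i 2^{\tilde a(t_i)}$ according to which of the two cases applies, the zero-coordinates contribute at most $m$ while the shifted coordinates contribute at most $2^{-\lceil \log_2 W - \log_2 m \rceil} W \leq m$, giving the target bound $\widetilde W \leq 2m$. Once this is established, I apply Theorem~\ref{bound on v(d, w)} with $s = ()$, $w = 0$ and $\tilde d := \lfloor \log_2 m + \log_2 \log_2 m + \log_2 \log_2 \log_2 m + 3.3 \rfloor$ to realize the \aop~on inputs with arrival times $\tilde a$ in delay at most $\tilde d$. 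Translating back from $\tilde a$ to $a$ costs an additive $\lceil \log_2 W - \log_2 m \rceil$, yielding the bound $\log_2 W + \log_2 \log_2 m + \log_2 \log_2 \log_2 m + 4.3$ as claimed.

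The main obstacle is verifying the hypothesis of Theorem~\ref{bound on v(d, w)}, i.e., that $2m \leq \zeta\, 2^{\tilde d - 1} / (\tilde d \log_2 \tilde d)$. This is a purely analytic inequality: substituting the definition of $\tilde d$, using monotonicity of $x \mapsto 2^{x-1}/(x \log_2 x)$ for $x \geq 2$, and bounding $\tilde d$ above by a multiple of $\log_2 m$ (something like $\tilde d \leq 1.8 \log_2 m$), one reduces to comparing $1.8 \log_2(1.8 \log_2 m)$ against $\zeta \cdot 2^{0.3} \log_2 \log_2 m$. This comparison only holds once $m$ is large enough, which forces a case split: for $m$ below some explicit threshold (say $500$), I would instead invoke the Held–Spirkl bound of $1.441 \log_2 \widetilde W + 2.674$ from~\cite{Spirkl}, which in that regime is already comfortably below $\log_2 m + \log_2 \log_2 m + \log_2 \log_2 \log_2 m + 3.3$ because $\widetilde W \leq 2m$ and $1.441 \log_2(2m) + 2.674$ stays under the target for small $m$. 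Choosing the threshold so that the analytic inequality becomes valid beyond it is the only delicate point of the argument.
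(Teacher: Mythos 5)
Your proposal is essentially identical to the paper's own proof: same shift/truncation of arrival times to get $\widetilde W \leq 2m$, same choice of $\tilde d$, same case split at $m = 500$ falling back on the Held--Spirkl bound~\cite{Spirkl} for small $m$, and the same analytic verification of the hypothesis of Theorem~\ref{bound on v(d, w)} for large $m$ followed by translating back by $\lceil \log_2 W - \log_2 m \rceil$. No material differences.
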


\begin{rem}
In the proof, we apply the algorithm from Held and Spirkl \cite{Spirkl} for
small instances. Without this trick, we would obtain a delay bound of
$\log_2 W + \log_2 \log_2 m + \log_2 \log_2 \log_2 m + 7$.
Moreover, for sufficiently large values of $m$, the delay bound in the previous theorem
can be improved slightly to
$   \log_2 W + \log_2 \log_2 m + \log_2 \log_2 \log_2 m + 4 +  \varepsilon
$
for any constant $\varepsilon > 0$:
Note that the factor $1.8$ in inequality~(\ref{eq::logbound33}) can be
decreased to a value arbitrarily close to $1$ if $m$ is sufficiently large.
Thus, also the factor $\Ceta 2^{0.3}$ in inequality~(\ref{eq::bound8})
becomes arbitrarily close to $1$ for large values of $m$.
This leads to the stated delay bound.
\end{rem}

The following theorem shows that the circuit described in Theorem \ref{theorem::improved_delay2}
does not only exist, but can also be computed efficiently.

\begin{theorem} \label{thm:practice}
 There is an algorithm that computes the circuit in Theorem~\ref{theorem::improved_delay2} for given $m \geq 3$
 in time $\mathcal{O}(m^2 \log_2 m)$.
\begin{proof}
As main subroutine, we will use Algorithm \ref{alg}.

\begin{claim}
Given input variables $s = (s_1,\dots,s_{n-1})$
and $t = (t_0,\dots,t_{m-1})$ with arrival times $a: \{t_0,\dots,t_{m-1}, s_0, \dotsc, s_{n-1}\} \to \N$,
Algorithm~\ref{alg} computes a Boolean circuit realizing $f(s, t)$
with delay at most $d$, where $d$ is the smallest
natural number with ${w:= W(s) < 2^{d-1}}$ and
$W(t) \leq \Ceta \frac{2^{d-1} - w}{d \log_2 (d)} + \frac{d-1}{d} \Lambda_t$.
The number of computation steps of Algorithm~\ref{alg} is
\[{\mathcal{O}(m (m+n) \log_2(m+n) + m\log_2\log_2 (W'))}\,,\]
where $W' = \sum_{i=0}^{m-1} 2^{a(t_i)} +  \sum_{i=0}^{n-1} 2^{a(s_i)}$.
\end{claim}

\begin{proof_of_claim}
We apply the recursive approach described in Algorithm~\ref{alg}
which arises from the proof of Theorem~\ref{main theorem}:
In line~\ref{algo compute d}, we compute the minimum $d \in \N$ such that
$W(t) \leq \Ceta \frac{2^{d-1} - w}{d \log_2 (d)} + \frac{d-1}{d} \Lambda_t$.
We have $d \in \mathcal O(\log_2 (W'))$, so $d$ can be computed by binary search in
$\mathcal O(\log_2 \log_2 (W'))$ steps.
Note that in line~\ref{algo compute d}, we have $w < 2^{d-1}$ since otherwise,
we would obtain a contradiction to $\Lambda_t \leq W(t)$ since
\begin{align*}
   W(t) &\leq \Ceta \frac{2^{d-1} - w}{d \log_2(d)} + \frac{d-1}{d} \Lambda_t < \frac{d-1}{d} \Lambda_t < \Lambda_t\,.
\end{align*}
Thus, Theorem~\ref{main theorem}
provides a circuit realizing $f(s, t)$ with delay $d$.
Lemma~\ref{base case} computes this realization if $d \leq 3$ (see lines \ref{algo small d start} to \ref{algo small d end}).
For $d > 3$, Lemmata~\ref{large w} (see lines \ref{algo large w start} to \ref{algo large w end})
and~\ref{small w} (see lines \ref{algo small w start} to \ref{algo small w end})
construct the circuit recursively.
Hence, the claimed delay bound is fulfilled by Theorem~\ref{main theorem}.

We prove the bound on the number of computation steps of Algorithm \ref{alg}
by counting the number of steps needed for a single call
excluding the recursive calls (i.e., lines
\ref{algo large w realization},
\ref{algo case 1 rec real},
\ref{algo realization f(s, t^*)},
\ref{algo small w recursion})
and bounding the number of recursion steps.

Note that whenever two non-disjoint sequences of inputs are considered as
alternating inputs in the algorithm, one of the sequences must be a subset
of the other. Therefore, the number of alternating input sequences considered
by the algorithm can be bounded by $m$.
Moreover, in each of the recursive calls in lines
\ref{algo case 1 rec real},
\ref{algo realization f(s, t^*)},
\ref{algo small w recursion},
the number of alternating inputs decreases by $1$,
and in the only other recursive call in line \ref{algo large w realization},
the number remains the same,
but in this case, we recursively compute $f((), t)$,
thus the number of alternating inputs will decrease in the next recursive call.
Thus, the number of recursive calls in the algorithm is bounded by $2m$.

Note that in each call of Algorithm~\ref{alg}, we compute at most one symmetric binary tree.
Since each symmetric tree we construct has at most $m+n$ inputs,
due to Remark~\ref{sym tree bound}, this takes $\mathcal{O}((m+n) \log_2 (m+n))$ steps per tree.


If we precompute the weight for each consecutive subset of $t$,
computing the prefix $t'$ in line \ref{compute t'} (or
finding out that no such prefix exists) requires $\mathcal O (\log_2 m)$ steps using binary search.

Apart from this, there are only constantly many steps in each recursive call.

Hence, the number of steps needed for each recursive call of Algorithm \ref{alg},
excluding lines
\ref{algo large w realization},
\ref{algo case 1 rec real},
\ref{algo realization f(s, t^*)}, and
\ref{algo small w recursion},
is at most $\mathcal O((m+n)\log_2(m+n) + \log_2\log_2(W'))$.
Since there are at most $2m$ recursive calls,
we have at most $\mathcal O(m(m+n) \log_2 (m+n) + m \log_2\log_2(W'))$ steps in total,
which finishes the proof of the claim.
\end{proof_of_claim}

Now we can prove the theorem.
We follow the proof of Theorem~\ref{theorem::improved_delay2},
also using its notation.
For $m < M$, we construct the circuit described in \cite{Spirkl}
such that nothing more is to show due to the properties collected in Table~\ref{non-uniform bounds}.
For $m \geq M$,
we compute the modified instance with arrival times $\tilde a$ and weight $\widetilde W$
in linear time. Then, we call Algorithm~\ref{alg} with the modified arrival times $\tilde a$.
Since $\widetilde W \in \mathcal O(m)$,
the sizes of all numbers occurring in the algorithm are polynomial in $m$.
Applying the claim with $s=()$, hence $n = 0$, and $W' = \widetilde W$, we obtain a running time of
$\mathcal{O}(m^2 \log_2(m))$.
\end{proof}
\end{theorem}

\pagebreak

\LinesNumbered
 \begin{algorithm}[H]
   \DontPrintSemicolon
    \KwIn{Inputs $s = (s_0, \dotsc, s_{n-1})$ and $t = (t_0, \dotsc, t_{m-1})$, arrival times $a(s_i), a(t_i) \in \N$.}
    \KwOut{Circuit $C$ computing $f(s, t)$.}
    \BlankLine
    Set $w := W(s)$.
    Choose $d \in \N$ minimum such that $W(t) \leq \Ceta \frac{2^{d-1} - w}{d \log_2 (d)} + \frac{d-1}{d} \Lambda_t$. \; \label{algo compute d}
    \If{$d \leq 3$}
    {
      \If{$m \leq 2$}
      {
         Obtain a circuit for $f(s, t)$ as a binary tree $s_0 \land \dotsc \land s_{n-1} \land t_0 \land \dotsc \land t_{m-1}$.\; \label{algo small d start}
      }
      \Else
      {
         Obtain a circuit for $f(s, t)$ via $f(s, t) = t_0 \land (t_1 \lor t_2)$.\; \label{algo small d end}
      }
    }
    \ElseIf{$2^{d-2} \leq w$}
    {
       Construct a binary tree realizing $s_0 \land \dotsc \land s_{n-1}$. \; \label{algo large w start}
       Recursively realize $f((), t)$. \; \label{algo large w realization}
       Obtain a circuit for $f(s, t)$ via $f(s, t) = (s_0 \land \dotsc \land s_{n-1}) \land f((), t)$. \; \label{algo large w end}
    }
    \Else
    {
       \If{$W(t_0) > \Ceta \frac{2^{d-2} - w}{(d-1)\log_2(d-1)}$}
       {\label{algo small w start}
          Construct a binary symmetric tree realizing $s_0 \land \dotsc \land s_{n-1} \land t_0$. \;
          Recursively realize $f^*((), (t_1, \dotsc, t_{m-1}))$. \; \label{algo case 1 rec real}
          Obtain a circuit for $f(s, t)$ via $f(s, t) = (s_0 \land \dotsc \land s_{n-1} \land t_0) \land f^*((), (t_1, t_2, \dotsc, t_{m-1}))$. \; \label{algo case 1}
       }
       \Else
       {
          Choose a maximum odd-length prefix $t'$ of $t$ with $W(t') \leq \Ceta \frac{2^{d-2} - w}{(d-1)\log_2(d-1)}$.\; \label{compute t'}
          Set $t'' := t \backslash t'$.\;
          Set $\tilde t := t' \cup t''_0 \cup t''_1$.\;
          Set $t^* := \begin{cases}
                            t'        & \text{if } t'' \text{ contains $\leq 2$ elements or } W(\tilde t) > \Ceta \frac{2^{d-2} - w}{(d-1)\log_2 (d-1)} + \frac{d-2}{d-1} \Lambda_{\tilde t}\,, \\
                            \tilde{t} & \text{otherwise}\,.
                           \end{cases}$\;
          Set $t^{**} := t \backslash t^*$.\;
          Recursively realize $f(s, t^*)$.\; \label{algo realization f(s, t^*)}
          \If{$t''$ contains at most $2$ elements}
          {
             Realize $f^*(\widehat{t^*}, t^{**})$ as a binary symmetric tree.\; \label{algo small w <= 2}
          }
          \ElseIf{$t^* = t'$ and $t''$ contains exactly $3$ elements}
          {
             Directly construct a circuit realizing $f^*(\widehat{t'}, t'') = (\widehat{t'} \lor t''_0) \lor (t''_{1} \land t''_{2})$.\; \label{algo small w = 3}
          }
          \Else
          {
             Recursively realize $f^*(\widehat{t^*}, t^{**})$.\; \label{algo small w recursion}
          }
          Obtain a circuit for $f(s, t)$ via $f(s, t) = f(s, t^*) \land f^*(\widehat{t^*}, t^{**})$.\; \label{algo small w end}
       }
    }
   \caption{Circuit Construction}
   \label{alg}
 \end{algorithm}

\pagebreak

Our main objective when designing good circuits for \aop s is delay.
Still, there are other metrics to be regarded during circuit construction
such as the size, i.e., the total number of gates used in the circuit,
and maximum fanout, i.e., the maximum number of successors of any input or gate.

\begin{theorem} \label{thm:size_fanout}
 The circuit computed in Theorem~\ref{thm:practice}
 has size at most \[ m \left(\log_2 m + \log_2 \log_2 m + \log_2 \log_2 \log_2 m + 3.3\right) \] and
 maximum fanout at most \[\log_2 m + \log_2 \log_2 m + \log_2 \log_2 \log_2 m + 3.3\,.\]
\end{theorem}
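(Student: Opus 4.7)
The plan is to handle separately the two parameter regimes already used in the proof of Theorem~\ref{thm:practice}. For $m < M := 500$, the circuit is the one produced by the method of \cite{Spirkl}, which by the entries in Table~\ref{non-uniform bounds} has size $\mathcal{O}(m)$ and maximum fanout $2$: both bounds therefore hold trivially for small $m$. For $m \geq M$, the circuit comes from running Algorithm~\ref{alg} on the modified instance with arrival times $\tilde a$, which attains delay at most $\tilde d := \lfloor \log_2 m + \log_2 \log_2 m + \log_2 \log_2 \log_2 m + 3.3 \rfloor$ by the proof of Theorem~\ref{theorem::improved_delay2}. In this regime I would derive both claims from a single fanout analysis.

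I would first prove the fanout bound. The structural observation is that every \AND~or \OR~gate produced by Algorithm~\ref{alg} feeds as an input of exactly one other gate: the output gate of a recursively constructed subcircuit feeds only into the $\land$ or $\lor$ implementing the split at its parent call, and every internal node of a Huffman symmetric tree feeds only into its parent inside that same tree. Hence every gate has fanout exactly $1$, and only input variables $t_i$ can have fanout greater than one. Inspecting the branches of Algorithm~\ref{alg}, I would verify that the only place where an input occurs in two sibling subcircuits is the alternating split of line~\ref{algo small w end}, where the entries of $\widehat{t^*}$ simultaneously still belong to $t^*$ in the recursive call for $f(s, t^*)$ and become symmetric inputs of the recursive call for $f^*(\widehat{t^*}, t^{**})$. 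Once an input $t_i$ becomes symmetric it remains symmetric in every descendant call (symmetric splits and dual alternating splits do not duplicate symmetric inputs) and is ultimately used once inside a Huffman tree. An induction on the recursion tree then yields that the fanout of $t_i$ equals $1$ plus the number of alternating splits along its recursion path for which $t_i$ lies in $\widehat{t^*}$. Since every recursive call strictly decreases the delay parameter and recursion halts at $d \leq 3$, this count is at most $\tilde d - 3$, so the fanout is bounded by $\tilde d - 2 \leq \log_2 m + \log_2 \log_2 m + \log_2 \log_2 \log_2 m + 3.3$.

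For the size bound I would exploit the fact that every gate has outdegree exactly $1$ to double-count the directed edges of the circuit DAG. Writing $G$ for the number of \AND/\OR~gates, the number of edges equals $2G + 1$ (two into each gate, one into the output node), so summing outdegrees gives
\[
   2G + 1 \;=\; \sum_{v \text{ gate}} \mathrm{outdeg}(v) \;+\; \sum_{i=0}^{m-1} \mathrm{fanout}(t_i) \;=\; G \;+\; \sum_{i=0}^{m-1} \mathrm{fanout}(t_i)\,.
\]
Hence $G = \sum_{i=0}^{m-1} \mathrm{fanout}(t_i) - 1 \leq m \cdot f_{\max}$, and plugging in the fanout bound just derived shows $G \leq m \cdot (\log_2 m + \log_2 \log_2 m + \log_2 \log_2 \log_2 m + 3.3) = \mathcal{O}(m \log_2 m)$, comfortably below $10\, m \log_2 m \log_2 \log_2 m$ for every $m \geq M$.

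The main obstacle is the per-case bookkeeping needed to confirm that no branch of Algorithm~\ref{alg} other than the alternating split of line~\ref{algo small w end} duplicates an input. In particular, the symmetric splits in lines~\ref{algo large w end} and~\ref{algo case 1}, the dual-side alternating split arising when the algorithm is called recursively on an $f^*$-type subproblem, the degenerate cases in lines~\ref{algo small w <= 2} and~\ref{algo small w = 3}, and the way symmetric inputs are propagated unchanged through alternating splits must each be checked to ensure they partition rather than duplicate the inputs they receive; once this is in place, the fanout induction and the edge-counting argument finish the proof.
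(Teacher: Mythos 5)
Your proof is correct. The fanout analysis follows essentially the same route as the paper's: the paper proves by induction on $d$ that every gate has fanout exactly $1$, every symmetric input has fanout $1$, and every alternating input $t_i$ has fanout at most $d$, with the $+1$ per level accrued precisely when $t_i$ enters $\widehat{t^*}$ in an alternating split; your phrasing in terms of counting alternating splits along the recursion path encodes the same inductive invariant.

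Where your argument genuinely departs from the paper is the size bound. The paper observes that the circuit is a tree of $2$-input gates with combinatorial depth at most $\tilde d = \lfloor \log_2 m + \log_2\log_2 m + \log_2\log_2\log_2 m + 3.3 \rfloor$ and then crudely bounds the number of gates by $2^{\tilde d} - 1 < 2^{3.3} m \log_2 m \log_2\log_2 m$. You instead use the handshake argument: since every gate has outdegree exactly $1$ and indegree $2$, double-counting edges gives $G = \sum_i \mathrm{fanout}(t_i) - 1 \leq m f_{\max}$, which plugs the already-established fanout bound directly into the size bound. This yields $G = \mathcal{O}(m\log_2 m)$, asymptotically better than the stated $10\,m\log_2 m\log_2\log_2 m$ (which your bound of course also implies for all $m\geq 3$). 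Both proofs rely on the same key structural fact (gate fanout $= 1$), but your edge-count ties size to fanout rather than to depth, which is tighter because the tree is far from complete.

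One small gap worth noting: you write that the number of alternating splits along a recursion path is at most $\tilde d - 3$ because ``every recursive call strictly decreases the delay parameter.'' This is true but deserves a sentence of justification, since Algorithm~\ref{alg} recomputes $d$ from scratch at each call; the reason $d$ decreases is that the lemmas used to justify each split (Lemmas~\ref{large w} and~\ref{small w}, and the base cases) guarantee the recursive subproblem already satisfies requirement~(\ref{main req}) with parameter $d-1$, so the minimum chosen in line~\ref{algo compute d} of the recursive call is at most $d-1$. With that observation supplied, the argument is complete.
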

\begin{proof}
In order to prove the fanout bound, we show the following claim.

\begin{claim}
In the circuit computed by Algorithm \ref{alg},
each gate has fanout exactly $1$,
each input in $s$ has fanout exactly $1$ and each input in $t$ has fanout at most $d$.
\begin{proof_of_claim}
Note that each gate constructed has fanout $1$.
We prove the bound on the maximum fanout of the inputs by induction on $d$.

Note that in the realizations computed by Lemma~\ref{base case}
which is used in lines \ref{algo small d start} to \ref{algo small d end},
each input has fanout $1$.
In the realizations provided in lines \ref{algo large w end} and \ref{algo case 1},
inputs of $s$ only occur in exactly one binary tree and thus have a fanout of $1$.
Since we can inductively assume that $f((), t)$ and $f((), (t_1, \dotsc, t_{m-1}))$
fulfill the claimed fanout bound, respectively,
each input in $t$ has fanout at most $d-1 < d$.

In the realization in line \ref{algo small w end},
inductively, the circuit for $f(s, t^*)$ computed in line \ref{algo small w recursion}
has fanout at most $1$ for inputs in $s$
and fanout at most $d-1$ for inputs in $t^*$.
Since inputs of $s$ do not occur in $f^*(\widehat{t^*}, t^{**})$,
it remains to show that inputs of $t$ have fanout at most $d$ in the realization for $f(s, t)$.
If $t''$ has at most $3$ inputs, lines \ref{algo small w <= 2} and \ref{algo small w = 3} show that each input of $t$
has fanout at most $1$ in the realization of $f^*(\widehat{t^*}, t^{**})$, which proves the claimed fanout bounds.
Otherwise, we realize $f^*(\widehat{t^*}, t^{**})$ recursively in line \ref{algo small w recursion}
and inductively can assume that the inputs of $\widehat{t^*}$ have fanout at most $1$
and the inputs of $t^{**}$ fanout at most $d-1$ in this realization.
Together with the recursive fanout bounds for the realization of $f(s, t^*)$,
this shows the claimed fanout bounds for the circuit constructed for $f(s, t)$.
This proves the claim.
\end{proof_of_claim}
\end{claim}

For the bound on the size, recall that the circuit we construct is a formula.
Hence, the size is exactly $\sum_{i} \text{fanout}(t_i) - 1$.
By the proven fanout bound, this is at most
\[m \left(\log_2 m + \log_2 \log_2 m + \log_2 \log_2 \log_2 m + 3.3\right)\,. \qedhere\]
\end{proof}

\begin{rem}
 For given $m \geq 3$, with the additional use of buffers,
 the circuit constructed in Theorem \ref{thm:practice}
 can be transformed into a logically equivalent circuit
 with maximum fanout $2$, but
 with delay at most
 \[\log_2 W + 2 \log_2 \log_2 m + \log_2 \log_2 \log_2 m + 6\]
 and size at most $\mathcal O(m \log_2 m \log_2 \log_2 m)$.

 To see this, first note that the circuit constructed in \cite{Spirkl}
 which we use for small instances already has a maximum fanout of $2$.
 Secondly, note that the circuit $C$ constructed in Theorem~\ref{thm:practice}
 has fanout larger than $1$ only at the inputs.
 Write
 \[f := \log_2 m + \log_2 \log_2 m + \log_2 \log_2 \log_2 m + 3.3\]
 for the maximum possible fanout of $C$.
 For each input $t_i$, we can replace the outgoing edges of $t_i$
 by a delay-optimum buffer tree with maximum fanout $2$ for each buffer (compare Remark~\ref{sym tree bound}).
 This increases the size by at most $m(f - 1)$ and,
 since we can assume that $m \geq 500$, the delay by at most $\log_2 f \leq \log_2 \log_2 m + 1$.
 This yields the stated properties of the transformed circuit.
\end{rem}

\section{More General Boolean Functions} \label{generalized aops}

In this section, we extend Theorem \ref{theorem::improved_delay2}
from \aop s to similar functions that do not alternate between $\land$~and $\lor$~regularly, but arbitrarily.

\begin{definition}
 We call a Boolean function of the form
 \[h(t, \circ_1, \dotsc, \circ_{m-1}) := t_0 \circ_1 \bigg(t_1 \circ_2 \Big(t_2 \circ_3 \big( \dotsc \circ_{m-2} (t_{m-2} \circ_{m-1} t_{m-1})\dotsc\big)\Big)\bigg)\,,\]
 where $t = (t_0, \dotsc, t_{m-1})$ are Boolean input variables
 and for each $i \in \{1, \dotsc, m-1\}$, the symbol $\circ_i$ denotes a two-input gate over the basis $\{\land, \lor\}$,
 a \emph{generalized \aop}.
\end{definition}

\begin{figure}
\centering
\providecommand{\subfigw}{0.44\textwidth}
\begin{minipage}{\subfigw}
\vspace{-.4cm}
\resizebox{\textwidth}{!}{%
\begin{tikzpicture}
\node[outer sep=0pt, scale = 1.6] (x0)  at (-1.5, 6.2){$t_0$};
\node[outer sep=0pt, scale = 1.6] (x1)  at (-0.5, 6.2){$t_1$};
\node[outer sep=0pt, scale = 1.6] (x2)  at (0.5,  6.2){$t_2$};
\node[outer sep=0pt, scale = 1.6] (x3)  at (1.5,  6.2){$t_3$};
\node[outer sep=0pt, scale = 1.6] (x4)  at (2.5,  6.2){$t_4$};
\node[outer sep=0pt, scale = 1.6] (x5)  at (3.5,  6.2){$t_5$};
\node[outer sep=0pt, scale = 1.6] (x6)  at (4.5,  6.2){$t_6$};
\node[outer sep=0pt, scale = 1.6] (x7)  at (5.5,  6.2){$t_7$};
\node[outer sep=0pt, scale = 1.6] (x8)  at (6.5,  6.2){$t_8$};
\node[outer sep=0pt, scale = 1.6] (x9)  at (7.5,  6.2){$t_9$};
\node[outer sep=0pt, scale = 1.6] (x10) at (8.5,  6.2){$t_{10}$};
\node[outer sep=0pt, scale = 1.6] (x11) at (9.5,  6.2){$t_{11}$};

\draw[thick, decorate,decoration={brace, amplitude=5pt, raise=10pt}] (x0.west) -- (x0.east)  node [midway, above, sloped, yshift=15pt, scale=1.6] {$P_0$};
\draw[thick, decorate,decoration={brace, amplitude=5pt, raise=10pt}] (x1.west) -- (x4.east)  node [midway, above, sloped, yshift=15pt, scale=1.6] {$P_1$};
\draw[thick, decorate,decoration={brace, amplitude=5pt, raise=10pt}] (x5.west) -- (x6.east)  node [midway, above, sloped, yshift=15pt, scale=1.6] {$P_2$};
\draw[thick, decorate,decoration={brace, amplitude=5pt, raise=10pt}] (x7.west) -- (x7.east)  node [midway, above, sloped, yshift=15pt, scale=1.6] {$P_3$};
\draw[thick, decorate,decoration={brace, amplitude=5pt, raise=10pt}] (x8.west) -- (x11.east) node [midway, above, sloped, yshift=15pt, scale=1.6] {$P_4$};

\node[fill=red, outer sep=0pt, and gate US, draw, logic gate inputs=nn, rotate=270, thick] at (9,5) (or8){};
\node[fill=red, outer sep=0pt, and gate US, draw, logic gate inputs=nn, rotate=270, thick] at (8,4) (and7){};
\node[fill=red, outer sep=0pt, and gate US, draw, logic gate inputs=nn, rotate=270, thick] at (7,3) (or1){};

\node[fill=green, outer sep=0pt, or gate US, draw, logic gate inputs=nn, rotate=270, thick] at (6,2) (and1){};

\node[fill=red, outer sep=0pt, and gate US, draw, logic gate inputs=nn, rotate=270, thick] at (5,1) (or4){};
\node[fill=red, outer sep=0pt, and gate US, draw, logic gate inputs=nn, rotate=270, thick] at (4,0) (and2){};

\node[fill=green, outer sep=0pt, or gate US, draw, logic gate inputs=nn, rotate=270, thick] at (3,-1) (or5){};
\node[fill=green, outer sep=0pt, or gate US, draw, logic gate inputs=nn, rotate=270, thick] at (2,-2) (and6){};
\node[fill=green, outer sep=0pt, or gate US, draw, logic gate inputs=nn, rotate=270, thick] at (1,-3) (or9){};
\node[fill=green, outer sep=0pt, or gate US, draw, logic gate inputs=nn, rotate=270, thick] at (0,-4) (and10){};

\node[fill=red, outer sep=0pt, and gate US, draw, logic gate inputs=nn, rotate=270, thick] at (-1,-5) (or11){};

\draw[thick] (or1.output) -- (and1.input 1);
\draw[thick] (x8) -- (or1.input 2);
\draw[thick] (x7) -- (and1.input 2);
\draw[thick] (x9) -- (and7.input 2);

\draw[thick] (or4.output) -- (and2.input 1);
\draw[thick] (or8.output) -- (and7.input 1);

\draw[thick] (x5) -- (and2.input 2);
\draw[thick] (x10) -- (or8.input 2);
\draw[thick] (x11) -- (or8.input 1);

\draw[thick] (x6) -- (or4.input 2);
\draw[thick] (and1.output) -- (or4.input 1);

\draw[thick] (and2.output) -- (or5.input 1);
\draw[thick] (or5.output) -- (and6.input 1);

\draw[thick] (x4) -- (or5.input 2);
\draw[thick] (x3) -- (and6.input 2);

\draw[thick] (x2) -- (or9.input 2);
\draw[thick] (and6.output) -- (or9.input 1);
\draw[thick] (and7.output) -- (or1.input 1);

\draw[thick] (or9.output) -- (and10.input 1);
\draw[thick] (x1) -- (and10.input 2);
\draw[thick] (and10.output) -- (or11.input 1);
\draw[thick] (x0) -- (or11.input 2);

\end{tikzpicture}}
\caption*{(a) \, A generalized \aop.}
\end{minipage}
\hfill
\begin{minipage}{\subfigw}
\resizebox{\textwidth}{!}{%
\begin{tikzpicture}
\node[outer sep=0pt, scale = 1.6] (x0)  at (-1.5, 6.2){$t_0$};
\node[outer sep=0pt, scale = 1.6] (x1)  at (-0.5, 6.2){$t_1$};
\node[outer sep=0pt, scale = 1.6] (x2)  at (0.5,  6.2){$t_2$};
\node[outer sep=0pt, scale = 1.6] (x3)  at (1.5,  6.2){$t_3$};
\node[outer sep=0pt, scale = 1.6] (x4)  at (2.5,  6.2){$t_4$};
\node[outer sep=0pt, scale = 1.6] (x5)  at (3.5,  6.2){$t_5$};
\node[outer sep=0pt, scale = 1.6] (x6)  at (4.5,  6.2){$t_6$};
\node[outer sep=0pt, scale = 1.6] (x7)  at (5.5,  6.2){$t_7$};
\node[outer sep=0pt, scale = 1.6] (x8)  at (6.5,  6.2){$t_8$};
\node[outer sep=0pt, scale = 1.6] (x9)  at (7.5,  6.2){$t_9$};
\node[outer sep=0pt, scale = 1.6] (x10) at (8.5,  6.2){$t_{10}$};
\node[outer sep=0pt, scale = 1.6] (x11) at (9.5,  6.2){$t_{11}$};

\draw[thick, decorate,decoration={brace, amplitude=5pt, raise=10pt}] (x0.west) -- (x0.east)  node [midway, above, sloped, yshift=15pt, scale = 1.6] {$P_0$};
\draw[thick, decorate,decoration={brace, amplitude=5pt, raise=10pt}] (x1.west) -- (x4.east)  node [midway, above, sloped, yshift=15pt, scale = 1.6] {$P_1$};
\draw[thick, decorate,decoration={brace, amplitude=5pt, raise=10pt}] (x5.west) -- (x6.east)  node [midway, above, sloped, yshift=15pt, scale = 1.6] {$P_2$};
\draw[thick, decorate,decoration={brace, amplitude=5pt, raise=10pt}] (x7.west) -- (x7.east)  node [midway, above, sloped, yshift=15pt, scale = 1.6] {$P_3$};
\draw[thick, decorate,decoration={brace, amplitude=5pt, raise=10pt}] (x8.west) -- (x11.east) node [midway, above, sloped, yshift=15pt, scale = 1.6] {$P_4$};

\node[fill=yellow, outer sep=0pt, or gate US, draw, logic gate inputs=nn, rotate=270, thick] at ( 0.5,5) (or1){};
\node[fill=yellow, outer sep=0pt, or gate US, draw, logic gate inputs=nn, rotate=270, thick] at ( 1.5,5) (or2){};
\node[fill=yellow, outer sep=0pt, or gate US, draw, logic gate inputs=nn, rotate=270, thick] at ( 1,3.8) (or3){};

\node[fill=yellow, outer sep=0pt, and gate US, draw, logic gate inputs=nn, rotate=270, thick] at (4,5) (and1){};

\node[fill=yellow, outer sep=0pt, and gate US, draw, logic gate inputs=nn, rotate=270, thick] at (8,5) (and2){};
\node[fill=yellow, outer sep=0pt, and gate US, draw, logic gate inputs=nn, rotate=270, thick] at (8.5,3.8) (and3){};
\node[fill=yellow, outer sep=0pt, and gate US, draw, logic gate inputs=nn, rotate=270, thick] at (7.5,2.8) (and4){};

\node[fill=green, outer sep=0pt, or gate US, draw, logic gate inputs=nn, rotate=270, thick] at ( 6.5,1.8) (or4){};
\node[fill=red, outer sep=0pt, and gate US, draw, logic gate inputs=nn, rotate=270, thick] at (5.5,0.8) (and5){};
\node[fill=green, outer sep=0pt, or gate US, draw, logic gate inputs=nn, rotate=270, thick] at ( 4.5,-.2) (or5){};
\node[fill=red, outer sep=0pt, and gate US, draw, logic gate inputs=nn, rotate=270, thick] at (3.5,-1.2) (and6){};

\phantom{\node[fill=red, outer sep=0pt, and gate US, draw, logic gate inputs=nn, rotate=270, thick] at (-1,-5) (or11){};}

\draw[thick] (x1) -- (or1.input 2);
\draw[thick] (x3) -- (or1.input 1);

\draw[thick] (x2) -- (or2.input 2);
\draw[thick] (x4) -- (or2.input 1);

\draw[thick] (or1.output) -- (or3.input 2);
\draw[thick] (or2.output) -- (or3.input 1);

\draw[thick] (x5) -- (and1.input 2);
\draw[thick] (x6) -- (and1.input 1);

\draw[thick] (x9) -- (and2.input 2);
\draw[thick] (x10) -- (and2.input 1);

\draw[thick] (and2.output) -- (and3.input 2);
\draw[thick] (x11) -- (and3.input 1);

\draw[thick] (x8) -- (and4.input 2);
\draw[thick] (and3.output) -- (and4.input 1);

\draw[thick] (x7) -- (or4.input 2);
\draw[thick] (and4.output) -- (or4.input 1);

\draw[thick] (and1.output) -- (and5.input 2);
\draw[thick] (or4.output) -- (and5.input 1);

\draw[thick] (or3.output) -- (or5.input 2);
\draw[thick] (and5.output) -- (or5.input 1);

\draw[thick] (x0) -- (and6.input 2);
\draw[thick] (or5.output) -- (and6.input 1);

\end{tikzpicture}}
\caption*{(b) \, Equivalent circuit after performing Huffman coding on the groups $P_b$.}
\end{minipage}
\caption{Illustration of the proof of Theorem \ref{generalized proof}.}
\label{generalized aop figure}
\end{figure}
 \begin{theorem} \label{generalized proof}
  Given Boolean input variables $t = (t_0, \dotsc, t_{m-1})$ and gates $\circ_1, \dotsc, \circ_{m-1}$,
  there is a circuit realizing the generalized \aop~$h(t, \circ_1, \dotsc, \circ_{m-1})$
  with delay at most
  \[\log_2 (W) + \log_2 \log_2 (c+1) + \log_2 \log_2 \log_2 (c+1) + 5.3\,,\]
  size at most $10 (c+1) \log_2 (c+1) \log_2 \log_2 (c+1) + m - c - 1$ and
  maximum fanout at most
  $\log_2 (c+1) + \log_2 \log_2 (c+1) + \log_2 \log_2 \log_2 (c+1) + 3.3$,
  where $c$ denotes the number of changes between $\land$ and $\lor$ or vice versa.
 \proof
 We first prove the delay bound.
 We partition the inputs $t_0, \dotsc, t_{m-1}$ of $h$ into $c+1$ maximal groups $P_0, \dotsc, P_{c}$
 of consecutive inputs that feed the same kind of gate, see Figure~\ref{generalized aop figure}~(a).
 We denote the common gate type of the gates fed by the inputs $P_b$ by $G_b$.
 Note that for each $b \in \{0, \dotsc, c-1\}$, the group $P_b$ contains at least $1$ input,
 while $P_c$ contains at least $2$ inputs.

 For each $b \in \{0, \dotsc, c\}$,
 we can build a symmetric binary $G_b$-tree on the inputs $P_b$ using Huffman coding.
 By Remark~\ref{sym tree bound}, this yields a Boolean circuit $C_b$ with output $t'_b$ with delay
 \begin{equation} \label{P_b delay}
   a(t'_b) = \left\lceil \log_2\left(\sum_{t_i \in P_b} W(t_i)\right) \right\rceil\,.
 \end{equation}

 Denote the outputs of the circuits $C_0, \dotsc, C_{c}$ by $t'_0, \dotsc, t'_{c}$, respectively.
 Without loss of generality, we may assume that $\circ_1 = \land$.
 Then, we can express $h(t, \circ_1, \dotsc, \circ_{m-1})$ as an \aop~as follows:
 \begin{equation} \label{h aop}
   h(t, \circ_1, \dotsc, \circ_{m-1}) = g(t'_0, \dotsc, t'_{c})
 \end{equation}
 In Figure~\ref{generalized aop figure}~(b), you can see the circuit arising from
 the generalized~\AOP~in Figure~\ref{generalized aop figure}~(a) in this way:
 The yellow gates are used for the circuits $C_b$ on the input groups $P_b$;
 and their outputs feed an~\aop~drawn with red~\AND~and green~\OR~gates.

 Write $W' := \sum_{b = 0}^{c} W(t'_i)$.
 Theorem \ref{theorem::improved_delay2} yields a circuit $C$ for $g((t'_0, \dotsc, t'_{c}))$
 and thus also for $h(t, \circ_1, \dotsc, \circ_{m-1})$ with delay at most
 \begin{equation} \label{delay bound W'}
   \log_2 (W') + \log_2 \log_2 (c+1) + \log_2 \log_2 \log_2 (c+1) + 4.3\,.
 \end{equation}
 Note that the weight $W'$ can be bounded by
 { 
  \newcommand{\ca}[1]{&\customalign{==}{#1}}
  \begin{align*}
   W' = \sum_{b = 0}^{c} 2^{a(t'_b)}
      \stackrel{(\ref{P_b delay})}{\leq} \sum_{b = 0}^{c} 2^{\log_2\left(\sum_{t_i \in P_b} W(t_i)\right) + 1}
      = 2 \sum_{b = 0}^{c} \sum_{t_i \in P_b} W(t_i)
      = 2 W(t)\,.
  \end{align*}
  Hence, the delay of $C$ stated in (\ref{delay bound W'}) can be bounded by
 \begin{align*}
   \ca{} \log_2 (W') + \log_2 \log_2 (c+1) + \log_2 \log_2 \log_2 (c+1) + 4.3 \\
   \ca{\leq} \log_2 (W) + \log_2 \log_2 (c+1) + \log_2 \log_2 \log_2 (c+1) + 5.3\,,
 \end{align*}
 which finishes the proof of the delay bound.

 For bounding the size of the arising circuit,
 note that for each $b \in \{0, \dotsc, c\}$, the circuit $C_b$ has $|P_b| - 1$ gates.
 Together with the size bound for the circuit realizing the \aop~(\ref{h aop}) on $c+1$ inputs
 shown in Theorem~\ref{thm:size_fanout},
 we obtain a total size of at most
 \begin{align*}
  \ca{} \sum_{b = 0}^c (|P_b| - 1) + 10 (c+1) \log_2 (c+1) \log_2 \log_2 (c+1) \\
  \ca{=} m - c - 1 + 10 (c+1) \log_2 (c+1) \log_2 \log_2 (c+1)\,.
 \end{align*}

 Since in the circuits $P_b$, $b = 0, \dotsc, c+1$, every node has exactly one predecessor
 and each input of the \aop~(\ref{h aop}) occurs only once,
 the maximum fanout occurs in the circuit for the \aop~(\ref{h aop}).
 Due to Theorem~\ref{thm:size_fanout}, this fanout is thus at most
 \[\log_2 (c+1) + \log_2 \log_2 (c+1) + \log_2 \log_2 \log_2 (c+1) + 3.3\,. \qedhere\]

 } 
 \end{theorem}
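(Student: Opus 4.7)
The plan is to reduce the problem for a generalized \AOP~to the alternating case already solved in Theorem~\ref{theorem::improved_delay2}, by first compressing each maximal block of same-type operators into a balanced symmetric tree. Concretely, I would partition the input sequence $t_0, \dots, t_{m-1}$ into $c+1$ maximal consecutive groups $P_0, \dots, P_c$ such that all inputs in $P_b$ feed gates of the same type $G_b \in \{\land, \lor\}$, and the types alternate between consecutive groups (by maximality, $c$ is the number of sign changes, as required). By construction, all but the last group $P_c$ contain at least one input, while $P_c$ contains at least two.

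Next, for each group $P_b$, I would build a balanced $G_b$-tree over the inputs in $P_b$ using Huffman coding, invoking Remark~\ref{sym tree bound}. This gives a circuit $C_b$ with a single output $t'_b$ whose arrival time satisfies $a(t'_b) = \lceil \log_2(\sum_{t_i \in P_b} W(t_i))\rceil$. Because consecutive blocks alternate in type, the outputs $t'_0, \dots, t'_c$ can be plugged directly into a (properly oriented) \aop, yielding the key identity $h(t, \circ_1, \dots, \circ_{m-1}) = g(t'_0, \dots, t'_c)$ (or $g^*$, depending on which type starts, where the $g^*$ case follows by the duality principle discussed after Definition~\ref{ext aop}).

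I would then apply Theorem~\ref{theorem::improved_delay2} to this alternating \aop~on $c+1$ inputs with arrival times $a(t'_b)$ and weight $W' := \sum_{b=0}^c 2^{a(t'_b)}$, obtaining a circuit of delay at most $\log_2 W' + \log_2\log_2(c+1) + \log_2\log_2\log_2(c+1) + 4.3$. The only real calculation is the bound $W' \le 2W$, which comes from the ceiling: $2^{a(t'_b)} \le 2 \sum_{t_i\in P_b} W(t_i)$, and summing over $b$ gives $W' \le 2W(t) = 2W$. Taking $\log_2$ converts the factor $2$ into the extra additive $+1$, turning the constant $4.3$ into $5.3$.

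For the size bound, each $C_b$ has $|P_b|-1$ gates, contributing $\sum_b(|P_b|-1) = m - c - 1$, to which I add the $10(c+1)\log_2(c+1)\log_2\log_2(c+1)$ bound from Theorem~\ref{thm:size_fanout} applied to the outer \aop~on $c+1$ inputs. For fanout, note that in every $C_b$ each node (input or gate) feeds a unique successor, and each $t'_b$ is used exactly once as an input of the outer \aop; hence the maximum fanout of the entire construction coincides with the maximum fanout of the outer \aop~circuit, which by Theorem~\ref{thm:size_fanout} is at most $\log_2(c+1) + \log_2\log_2(c+1) + \log_2\log_2\log_2(c+1) + 3.3$. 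The main obstacle, such as it is, is simply verifying $W' \le 2W$ and bookkeeping the ``which type starts first'' alternative cleanly via duality; no deeper combinatorial difficulty arises, since the heavy lifting has already been done in Theorem~\ref{theorem::improved_delay2} and Theorem~\ref{thm:size_fanout}.
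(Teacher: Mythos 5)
Your proposal is correct and follows essentially the same route as the paper: partition into maximal same-type groups, compress each group via Huffman coding (Remark~\ref{sym tree bound}), observe that the compressed outputs form an \aop~on $c+1$ inputs, apply Theorem~\ref{theorem::improved_delay2} and Theorem~\ref{thm:size_fanout}, and use the ceiling bound $W' \leq 2W$ to absorb the rounding into one extra unit of delay. The size and fanout arguments also match the paper's.
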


\pagebreak

\bibliographystyle{plainurl}
\bibliography{extended}

\end{document}